\newtheorem{Thm}{Theorem}[section]
\newtheorem{theorem}[Thm]{Theorem}
\newtheorem{proposition}[Thm]{Proposition}
\newtheorem{corollary}[Thm]{Corollary}
\newtheorem{remark}{Remark}[section]
\newtheorem{definition}[Thm]{Definition}
\title{The second law of thermodynamics as a deterministic theorem for quantum spin systems}
\author{Walter F. Wreszinski\footnote{wreszins@gmail.com, 
Instituto de Fisica, Universidade de S\~ao Paulo (USP), Brasil}}        
\begin{document}

\maketitle

\begin{abstract}
We review our approach to the second law of thermodynamics as a theorem assering the growth of the mean (Gibbs-von Neumann) 
entropy of quantum spin systems undergoing automorphic (unitary) adiabatic
transformations. Non-automorphic interactions with the environment, although known to produce on the average
a strict reduction of the entropy of systems with finite number of degrees of freedom, are proved to conserve the mean entropy
on the average. The results depend crucially on two properties of the mean entropy, 
proved by Robinson and Ruelle for classical systems, and Lanford and Robinson for quantum lattice systems: upper semicontinuity 
and affinity. 
\end{abstract}

\section{Introduction and Summary}

In this paper, we review our approach \cite{Wre1} to the second law of thermodynamics as a theorem asserting the growth of the mean 
(Gibbs-von Neumann) entropy of a class of quantum spin systems undergoing automorphic (unitary) adiabatic transformations. The
structure of the proofs in the original reference is clarified at several points in which only a sketch is given, at the same 
time allowing for a larger class of interaction potentials (of polynomial decrease rather than only exponential decrease at infinity).
We also compare this framework with our previous approach to the second law, together with Heide Narnhofer \cite{NarWre} based on the 
(quantum) Boltzmann entropy. Our main new result is that non-automorphic interactions with the environment, although known to produce 
on the average a strict reduction of the Boltzmann entropy of systems with finite number of degrees of freedom (lemma 3 of \cite{NarWre}), 
are proved to conserve the mean entropy on the average, as a consequence of the latter's property of affinity. As a consequence, we
do not need to assume that such interactions are rare on the thermodynamic time scale, in order to assure the validity of the second
law (see the remarks on the last paragraph of \cite{NarWre}).    

As described by Wehrl in a still very readable article (\cite{Wehrl}, p. 227 et seq. ``A paradox''), the common assertion of the 
second law of thermodynamics, that the entropy of a closed system never decreases, is in striking contradiction to the following fact.
Let $S_{N,V}^{G}$ denote the Gibbs-von Neumann entropy of a system of $N$ particles in a box of volume $V$, i.e.,
\begin{equation}
S_{N,V}^{G} = -k_{B} Tr \rho_{N,V}\log(\rho_{N,V})
\label{(1.1)}
\end{equation}
where $\rho_{N,V}$ is a positive trace-class operator (density matrix) describing the system, and $Tr$ is the trace over the 
corresponding Hilbert space ${\cal H}_{N,V}$. For a system described by a time-independent Hamiltonian $H_{N,V}$, however, the
density matrix $\rho_{N,V}(t)$ at time $t$ is obtained from the density matrix at time zero $\rho_{N,V}$ from the formula
\begin{equation}
\rho_{N,V}(t) = \exp(-iH_{N,V}t) \rho_{N,V} \exp(iH_{N,V}t)
\label{(1.2)}
\end{equation}
and, since $\exp(iH_{N,V}t)$ is a unitary operator, the repeated eigenvalues of $\rho_{N,V}(t)$ are the same as the eigenvalues 
of $\rho_{N,V}$, and thus
\begin{equation}
S_{N,V}^{G}(t) = S_{N,V}^{G}(0)
\label{(1.3)}
\end{equation}
As Wehrl remarks, ``there is one way out of this dilemma: the time-evolution of a system is not described by the Schr\"{o}dinger
equation, but by some other equation. In fact, in statistical mechanics one uses, with great success, equations like the
Boltzmann equation, the master equation, and other equations.'' We refer to \cite{Sewell1} for a very comprehensive discussion of
the master equation approach, as well as references. This approach concerns open systems. In the present paper we shall be, however, concerned
with \emph{closed} systems - systems completely isolated from all external influences. Traditional thermodynamics takes this idealization
as starting point, which is also done by Lieb and Yngvason in their rigorous axiomatic approach \cite{LY}. Examples such as the evolution 
of the Universe or the adiabatic irreversible expansion of a gas, in which the isolation of the system may be achieved to any degree 
of accuracy, suggests that this idealization also has a fundamental character, from the physical point of view (see also \cite{Wre1}).

We wish to maintain the universal definition \eqref{(1.1)} of the Gibbs-von Neumann entropy, and attempt to find a dynamical proof
of the second law, continuing to assume a deterministic evolution \eqref{(1.2)}, which implies \eqref{(1.3)}. An important hint 
in this direction was provided by Penrose and Gibbs (\cite{Pen1}, p.1959, \cite{Gibbs}) in the framework of classical statistical
mechanics, whereby \eqref{(1.1)} becomes
\begin{equation}
S_{G}^{\Gamma}(\rho) = -k_{B}\int_{\Gamma} dx \rho(x) \log(\rho(x))
\label{(1.4)}
\end{equation}
(defining $x\log(x)=0$, for $x=0$, as we do throughout), where $\Gamma$ denotes phase space, and $\rho$ is a phase space density,
i.e., a nonnegative function, absolutely continuous with respect to Lebesgue measure $dx$, such that
\begin{equation}
\int_{\Gamma}dx \rho(x) = 1
\label{(1.5)}
\end{equation}
The number of particles $N$ and the box of volume $V$ are implicit in $\Gamma$; if necessary to make them explicit, we shall write
$\Gamma_{N,V}$. Define \cite{Pen1} an adiabatic process as a process in which no heat enters or leaves the system, starting at
time zero, when the Hamiltonian is $H_{t=0}$ and the phase space density $\rho_{0}$ is invariant under $H_{t=0}$, and ends at time
one, when the Hamiltonian has changed to $H_{t=1}$, supposed to yield a mixing (hence ergodic) motion thereafter. In spite of the
time-dependence of the Hamiltonian, Liouville's theorem remains valid, which permits to calculate the phase space density $\rho_{t}$
for $t \ge 0$, but the final equilibrium is assumed to be described by the coarse-grained phase-space density $\bar{\rho}$, defined
by the averaging prescription
\begin{equation}
\bar{\rho}(x) = \lim_{T \to \infty} \frac{\int_{1}^{1+T} \rho_{t} dt}{T}
\label{(1.6)}
\end{equation}
which exists for a.e. $x \in \Gamma$, by Birkhoff's ergodic theorem and which, for ergodic systems as assumed, yields an invariant
density (see, e.g., \cite{Wa}, Chap. 1, paragraph 5), just as equilibrium is modelled in Gibbs-ensemble theory. We may now state

\begin{theorem}
\label{th:1.1}
(Penrose-Gibbs theorem) Under the above assumptions,
\begin{equation}
S_{G}^{\Gamma}(\rho_{0}) \le S_{G}^{\Gamma}(\bar{\rho})
\label{(1.7)}
\end{equation}
\end{theorem}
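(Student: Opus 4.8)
\section*{Proof proposal}

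The plan is to combine three ingredients: invariance of the Gibbs entropy under the Liouville flow, concavity of the entropy functional, and upper semicontinuity to control the infinite-time average. First I would establish that $S_{G}^{\Gamma}(\rho_{t}) = S_{G}^{\Gamma}(\rho_{0})$ for every $t\ge 0$. Although the Hamiltonian is time-dependent on $[0,1]$ and equals the mixing Hamiltonian $H_{t=1}$ thereafter, Liouville's theorem (valid despite the time-dependence, as noted above) guarantees that the induced phase-space flow $\Phi_{t}$ preserves Lebesgue measure $dx$. Since $\rho_{t}$ is the transport of $\rho_{0}$ along this flow, the change of variables $y=\Phi_{t}(x)$, which has unit Jacobian, turns $-k_{B}\int_{\Gamma}\rho_{t}\log\rho_{t}\,dy$ into $-k_{B}\int_{\Gamma}\rho_{0}\log\rho_{0}\,dx$ in \eqref{(1.4)}; this is the conservation law \eqref{(1.3)} in its classical guise. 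Note also that $\int_{\Gamma}\rho_{t}\,dx=1$ for all $t$, so the average $\bar\rho$ remains a bona fide density.

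Next I would introduce the finite-time averages $\bar\rho_{T}:=T^{-1}\int_{1}^{1+T}\rho_{t}\,dt$, which converge to $\bar\rho$ as $T\to\infty$ by Birkhoff's theorem, in accordance with \eqref{(1.6)}. Because the integrand $s\mapsto -k_{B}\,s\log s$ is concave, Jensen's inequality applied to the probability measure $T^{-1}\,dt$ on $[1,1+T]$ yields
$$
S_{G}^{\Gamma}(\bar\rho_{T}) \;\ge\; \frac{1}{T}\int_{1}^{1+T} S_{G}^{\Gamma}(\rho_{t})\,dt \;=\; S_{G}^{\Gamma}(\rho_{0}),
$$
where the last equality is exactly the conservation law from the first step. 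Thus the desired inequality already holds at every finite $T$, and only the passage to the limit remains.

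The final and most delicate step is letting $T\to\infty$. Since the $\bar\rho_{T}$ converge to $\bar\rho$ only in a weak sense, one cannot naively interchange the limit with the entropy; here the upper semicontinuity of $S_{G}^{\Gamma}$ (the property emphasized in the abstract) performs precisely the needed task. Upper semicontinuity gives $S_{G}^{\Gamma}(\bar\rho)\ge\limsup_{T\to\infty}S_{G}^{\Gamma}(\bar\rho_{T})$, and since each $S_{G}^{\Gamma}(\bar\rho_{T})\ge S_{G}^{\Gamma}(\rho_{0})$ we obtain $\limsup_{T\to\infty}S_{G}^{\Gamma}(\bar\rho_{T})\ge S_{G}^{\Gamma}(\rho_{0})$, whence $S_{G}^{\Gamma}(\bar\rho)\ge S_{G}^{\Gamma}(\rho_{0})$, which is \eqref{(1.7)}.

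I expect the main obstacle to be technical rather than conceptual: it lies in verifying that $\bar\rho_{T}\to\bar\rho$ in a topology for which $S_{G}^{\Gamma}$ is genuinely upper semicontinuous, so that the semicontinuity inequality may legitimately be invoked. Concretely, one must ensure convergence of the associated measures together with a uniform integrability or tightness condition guaranteeing that entropy does not \emph{leak} in the limit; the concavity argument of the second step is elementary by comparison, and the first step is an immediate consequence of measure preservation under the flow.
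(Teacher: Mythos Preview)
The paper does not supply its own proof of Theorem~\ref{th:1.1}; it is stated as a cited result of Penrose and Gibbs. The only proof-like content is the remark immediately following the statement, namely that the left side of \eqref{(1.7)} equals $\lim_{T}T^{-1}\int_{1}^{1+T}S_{G}^{\Gamma}(\rho_{t})\,dt$ while the right side is $S_{G}^{\Gamma}$ applied to the limiting density, so the inequality expresses non-interchangeability of the entropy functional with the time average. Your three-step scheme (Liouville invariance of $S_{G}^{\Gamma}(\rho_{t})$, Jensen via concavity of $-x\log x$ on the finite-time averages $\bar\rho_{T}$, then a limit passage) is exactly this observation spelled out, so your approach matches what the paper indicates.

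There is, however, a misattribution in your third step. You justify the limit passage by ``the upper semicontinuity of $S_{G}^{\Gamma}$ (the property emphasized in the abstract)'', but the abstract and the body of the paper reserve upper semicontinuity for the \emph{mean entropy} $s(\omega)$ of translation-invariant states on infinite quantum lattice systems (Theorem~\ref{th:3.6.2}); this is a different functional on a different state space. The upper semicontinuity you actually need for $S_{G}^{\Gamma}(\rho)=-k_{B}\int\rho\log\rho$ is the classical fact that relative entropy is weak*-lower semicontinuous, which the paper does not invoke. More to the point, the paragraph directly after Theorem~\ref{th:1.1} makes essentially the opposite observation: for a phase space $\Gamma$ of finite measure, $S_{G}^{\Gamma}$ is \emph{continuous} in the weak* topology, and the paper uses this to argue that \eqref{(1.7)} then degenerates to an equality --- which is precisely the motivation for passing to infinite systems and the mean entropy in Theorem~\ref{th:4.1}. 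So the ``delicate'' step you anticipate is, in the finite-volume setting the paper has in view here, not delicate at all; the genuine upper-semicontinuity argument you outline is really the content of the later Theorem~\ref{th:4.1}, not of Theorem~\ref{th:1.1}.
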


As remarked by Penrose (\cite{Pen1}, p. 1959), mathematically, the inequality \eqref{(1.7)} illustrates the non-interchangeability
of the functional $S_{G}^{\Gamma}$, given by \eqref{(1.4)}, and the limit on the r.h.s. of \eqref{(1.6)}, since, by Liouville's 
theorem (the analogue of \eqref{(1.3)}), the l.h.s. of \eqref{(1.7)} equals $\lim_{T \to \infty} \frac{\int_{1}^{1+T}S_{G}^{\Gamma}(\rho_{t})dt}{T}$,
while the r.h.s. equals $S_{G}^{\Gamma}(\lim_{T \to \infty} \frac{\int_{1}^{1+T}\rho_{t} dt}{T})$.

In \cite{Wre1} we observed that, for finite systems, i.e., associated to a phase space $\Gamma$ of finite measure (with $N,V$ finite),
$S_{G}^{\Gamma}$, given by \eqref{(1.4)}, is a continuous functional of $\rho$ (in the natural, weak* topology, see \cite{Sewell1}, p. 57), 
and therefore, unfortunately, \eqref{(1.3)} continues to hold.

For classical systems, it is the \emph{mixing property} which determines the approach to equilibrium, in the sense
\begin{equation}
\lim_{t \to \infty} \int_{\Gamma} \rho_{t}(x) G(x) dx = <G>_{eq} \equiv \int_{\Gamma} \bar{\rho}(x) G(x) dx
\label{(1.8)}
\end{equation}
for $G$ an observable, i.e., a continuous function over phase space (see \cite{Pen1}, (1.35), p. 1949, and references given there). Equation 
\eqref{(1.8)} suggests regarding $\rho$ as a \emph{state}, i.e., a normalized linear functional over the algebra of observables, which turns
out to be the natural choice for both classical and quantum statistical mechanics, in the former case the algebra being abelian. In this
context, the notion of \emph{mean entropy}
\begin{equation}
s_{G}(\rho) = \lim_{V \to \infty} \frac{S_{G}^{\Gamma_{N,V}}(\rho)}{V}
\label{(1.9)}
\end{equation}
as a functional of the state $\rho$ is the natural quantity to be considered. In \eqref{(1.9)}, the limit $V \to \infty$, assumed to exist,
is the thermodynamic limit in the sense of van Hove, whereby $N \to \infty$, $V \to \infty$, with $\frac{N}{V}=d$, $d$ denoting the particle
density in classical statistical mechanics \cite{RR}, or the van Hove limit in the case of quantum spin systems.

The reason for considering the mean entropy $s_{G}(\rho))$ rather than the entropy $S_{G}^{\Gamma}(\rho)$ of finite systems is two-fold, and
appears in both contexts of automorphic evolutions (section 3) and non-automorphic evolutions (section 4), the later also relevant to the
measurement problem studied by Hepp in his seminal paper \cite{Hepp}.

Firstly, the limit $t \to \infty$ will not, in general, commute with the limit of infinite volume (for an explicit example of this fact
in a soluble model in cosmology describing the CMB (cosmic microwave background) radiation, see \cite{Wresz}). Indeed, the rate of approach
to equilibrium may grow indefinitely when the volume tends to infinity, and it is therefore of essential importance to consider the
\emph{densities}, such as the entropy density in \eqref{(1.9)}, or the space-average of the magnetization in the model of section 4.

Secondly, quantum systems imperatively require the use of infinite systems in the study of dynamics, since, for finite systems, the
Hamiltonian has a discrete spectrum and the observables are quasi-periodic functions of time (see section (1.5) of \cite{Pen1}).

It turns out that consideration of $s_{G}(\rho)$ rather than $S_{G}^{\Gamma_{N,V}}(\rho)$ entails two bonuses: in the first place, it is 
upper-semicontinuous rather than continuous , leading to a refined form of Theorem ~\ref{th:1.1}, in which strict
growth (of $s_{G}(\rho)$) for large times is possible (Theorem 3.1). But even more is true: the property of affinity allows
a proof of stability of the second law, in the form of Theorem 3.1, under what we believe to be a paradigmatic non-automorphic interation
with the environment (Theorem 4.1) - in sharp contrast to the situation for finite systems, in which a \emph{reduction} of the average
entropy (von Neumann or Boltzmann) occurs (Lemma 3 in \cite{NarWre}).

As observed, a state $\omega$ will be defined as a normalized, positive linear functional over the algebra ${\cal A}$ of observables of
an infinite system. Readers unfamiliar with this notion may consult the classic book \cite{Sewell1}. This seems to be the adequate
moment to make a brief interlude to explain our restriction to quantum spin systems. We believe that the forthcoming Theorem 3.1 should
have a wide domain of applicability, including classical statistical mechanics, but consideration of the momenta still poses an open problem
in the classical case (see \cite{Ru}, p. 1666). Concerning the time evolution of quantum systems, only for quantum spin systems is the
time-evolution an automorphism of the natural C* algebra of observables, a fact used in Theorem 3.1; for quantum continuous systems, 
it is not, in general, the case. In order to start our explanation, we must introduce the main concepts.

An \emph{automorphism} $\tau_{t}(A)$ of ${\cal A}$ is basic concept: it is a one-to-one mapping of ${\cal A}$ onto ${\cal A}$ which
preserves the algebraic structure. For quantum spin systems, it equals the limit in the operator norm of the time-evolutes
$\exp(iH_{\Lambda}t) A \exp(-iH_{\Lambda}t)$ as the region $\Lambda \nearrow \mathbf{Z}^{\nu}$, and one speaks of a C*-dynamical 
system $({\cal A}, \tau_{t})$. The letter $\nu$ will always denote the spatial dimension. 

Comparison with the discussion preceding Theorem ~\ref{th:1.1} suggests the following definition (which follows \cite{Pen1} closely, adapting
his discussion to a framework which includes states of infinite systems):

\begin{definition} 
\label{Definition 1.2}
Let a C*-dynamical system $({\cal A}, \tau_{t})$ be given, with $t \in [-r,\infty)$. An \emph{adiabatic transformation} consists of two successive 
steps. The \emph{first step}, called preparation of the state, starts at some $t=-r$, with $r>0$, when the state $\omega_{-r}$ is 
invariant under the automorphism $\tau_{-r}$, and ends at $t=0$. We require the cyclicity condition
\begin{equation}
\label{(1.11)}
\tau_{-r}=\tau_{0}=\tau
\end{equation}
The \emph{second step} is a dynamical evolution of the state $\omega$ in the form
\begin{equation}
t \in \mathbf{R} \to \omega_{t} \equiv \omega \circ \tau_{t}
\label{(1.12)}
\end{equation}
where the circle denotes composition, i.e., $(\omega \circ \tau_{t})(A)=\omega(\tau_{t}(A))$. We call \eqref{(1.12)} an automorphic
evolution. It is assumed to be nontrivial, i.e., the state after preparation is \emph{not} invariant under the evolution:
\begin{equation}
\label{(1.13)}
(\omega_{0} \circ \tau) \ne \omega_{0}
\end{equation}
\end{definition}

Note that the automorphism $\tau$ refers to a \emph{time-independent} interaction. Between $t=-r$ ant $t=0$ in the above definition a
``time-dependent Hamiltonian'' is supposed to act, see the above discussion of the Penrose-Gibbs theorem in the classical case. 
Indeed, according to Definition ~\ref{Definition 1.2}, the system is closed from $t=0$ to any $t>0$, but not from $t=-r$ to $t=0$, where it is subject
to \emph{external conditions}, but is still thermally isolated. One form of the second law of thermodynamics may be found in the 
seminal paper of Pusz and Woronowicz \cite{PW}: if $\omega_{-r}$ is a \emph{passive state} (defined in \cite{PW} or \cite{BRo2}, p. 101),
it is unable to perform work in a cyclic process. This statement is analogous to the Kelvin-Planck statement, rigorously formulated and
proved as Theorem 3.4 of \cite{LY}. Under certain additional conditions (\cite{PW}, or \cite{BRo2}, Theorem 5.3.22), it may be proved
that $\omega_{-r}$ is either a ground state, a thermal (KMS) state or a ceiling (infinite temperature) state. We shall refer to any of
the latter states as an \emph{equilibrium state}, and chose our $\omega_{-r}$ in Definition ~\ref{Definition 1.2} as an equilibrium state.
As remarked in \cite{BRo2}, p. 211, (5.4.4), passivity ``reflects a property of stability of equilibrium states which is basically kinematic''.
It does not include the second step in Definition ~\ref{Definition 1.2}, which is basically dynamic and may be called a relaxation. For finite
systems, this process conserves the entropy, but we shall see that, for systems with infinite number of degrees of freedom, it may lead
to growth of the mean entropy. In the explicit examples we are able to provide, the mean entropy is conserved and equals zero in the
first step, while the growth appears in second step; it is a characteristic of states of infinite systems that pure states may tend
to mixed states for large times, a fact observed by Hepp \cite{Hepp}.

Equation \eqref{(1.13)} is a condition of \emph{dynamical instability}. We shall correspondingly refer to $\omega_{0}$ as an \emph{unstable state}.

\begin{remark}
\label{Remark 1.1}
The ``time-arrow'' problem.

In Definition ~\ref{Definition 1.2} we see that primitive causality requires a time interval $r \ne 0$ preceding the time of appearance of
the state $\omega_{0}=\omega$ during which the state is prepared. This is obviously not time-reversal invariant, and this mere fact implies
the existence of an ``arrow of time'', see also \cite{Wre1} and \cite{Pe}. We shall implicitly assume it in the present paper. See also
section 4.4.
\end{remark}

\section{Quantum spin systems}
 
\subsection{Generalities}

A prototype, which we shall use, is the generalized Heisenberg Hamiltonian (generalized Ising model (gIm) if $J_{1}=0$):
\begin{equation}
\label{(2.1)}
H_{\Lambda} = -2\sum_{x,y \in \Lambda}[J_{1}(x-y)(S_{x}^{1}S_{y}^{1}+S_{x}^{2}S_{y}^{2})+J_{2}(x-y)S_{x}^{3}S_{y}^{3}]
\end{equation}
where
\begin{equation}
\label{(2.2)}
\sum_{x \in \mathbf{Z}^{\nu}}|J_{i}(x)|< \infty \mbox{ and } J_{i}(0)=0 \mbox{ for } i=1,2
\end{equation}
Above, $\vec{S}_{x} \equiv (S_{x}^{1},S_{x}^{2},S_{x}^{3})$, where $S_{x}^{i}=1/2 \sigma_{x}^{i}, i=1,2,3$ and $\sigma_{x}$ are the
Pauli matrices at the site $x$. Above, $H_{\Lambda}$ acts on the Hilbert space ${\cal H}_{\Lambda}=\otimes_{x \in \Lambda}\mathbf{C}_{x}^{2}$,
and $\vec{S}_{x}$ is short for $\mathbf{1} \otimes \cdots \otimes \vec{S}_{x} \otimes \cdots \otimes \mathbf{1}$.
We define
${\cal A}(\Lambda) = B({\cal H}_{\Lambda})$
These local algebras satisfy:
1.) Causality: $[{\cal A}(B),{\cal A}(C)]=0$ if $B \cap C = \phi$ and 2.) Isotony: $B \subset C \Rightarrow {\cal A}(B) \subset {\cal A}(C)$.
${\cal A}_{L} = \cup_{B} {\cal A}(B)$ is termed the \emph{local} algebra; its closure with respect to the norm, the \emph{quasilocal} algebra
(observables which are, to arbitrary accuracy, approximated by observables attached to a \emph{finite} region). The norm is
$A \in B({\cal H}_{\Lambda}) \to ||A|| = sup_{||\Psi|| \le 1} ||A \Psi||$, $\Psi \in {\cal H}_{\Lambda}$.
An automorphism $\tau_{t}(A) \mbox{ equal to the norm limit of } \lim_{\Lambda \nearrow \infty} \exp(iH_{\Lambda}t) A \exp(-iH_{\Lambda}t)$ 
for $A \in {\cal A}(\Lambda)$. The limit $\lim_{\Lambda \nearrow \infty}$ will denote the van Hove limit \cite{BRo2}, p.287).
 
For a finite quantum spin system the Gibbs-von Neumann entropy is ($k_{B}=1$)
\begin{equation}
\label{(2.3)}
S_{\Lambda} = -Tr (\rho_{\Lambda} \log \rho_{\Lambda})
\end{equation}
We may view $\rho_{\Lambda}$ as a \emph{state} $\omega_{\Lambda}$ on ${\cal A}(\Lambda)$ - a positive, normed linear functional on ${\cal A}(\Lambda)$:
$\omega_{\Lambda}(A) = Tr_{{\cal H}_{\Lambda}} (\rho_{\Lambda} A) \mbox{ for } A \in {\cal A}(\Lambda)$
(positive means $ \omega_{\Lambda}(A^{\dag}A) \ge 0$, normed $\omega_{\Lambda}(\mathbf{1})=1$.)

\subsection{Finite versus infinite systems}

This notion of state generalizes to systems with
infinite number of degrees of freedom $\omega(A)= \lim_{\Lambda \nearrow \infty} \omega_{\Lambda}(A)$, at first for $A \in {\cal A}_{L}$
and then to ${\cal A}$.
The state $\omega_{\Lambda}^{t}(A) = \omega_{\Lambda}(\exp(iH_{\Lambda}t)A\exp(-iH_{\Lambda}t))$ does \emph{not} have a limit
as $\Lambda \nearrow \infty$ because the spectrum is discrete and the state an almost periodic function of $t$.

As we shall see later, for states of infinite systems, however, the situation is \emph{entirely different}!

The function $S_{\Lambda}^{t} \equiv S_{\Lambda}(\omega_{\Lambda}^{t}) = -Tr(\rho_{\Lambda}^{t} \log \rho_{\Lambda}^{t}) = S_{\Lambda}^{0}$.
It is a \emph{continuous} functional of the state $\omega_{\Lambda}^{t}$.
For a large system the \emph{mean entropy} is the natural quantity from the physical standpoint:
\begin{equation}
\label{(2.4)}
s(\omega) \equiv \lim_{\Lambda \nearrow \infty} (\frac{S_{\Lambda}}{|\Lambda|})(\omega)
\end{equation}
The mean entropy has the following two properties \cite{LanRo}:
\begin{itemize}
\item [$a.)$] $0 \le s(\omega) \le \log D$ where $D=2S+1$;
\item [$b.)$] $s$ is \emph {upper semicontinuous}, that is $lim sup_{n \to \infty} s(\omega_{n}) \le s(\omega)$.
\end{itemize}
For b.) see Lemma I.1 of Appendix 2 of \cite{BB}: $\omega_{n}$ is a sequence of states such that $\omega_{n} \to \omega$ 
in the weak*- topology, i.e., $\omega_{n} (A) \to \omega(A) \forall A \in {\cal A}$. One simple example of b.) is a 
characteristic function of a closed set. For the statistical thermodynamical significance of this property, se \cite{Sewell1}, p.55.

Let $\Gamma = \mathbf{Z}^{\nu}$ and $P_{0}(\Gamma)$ denote the collection of all finite parts of $\Gamma$. We consider in this
paper quantum spin systems described by a Hamiltonian for any finite region $\Lambda \subset \mathbf{Z}^{\nu}$
\begin{equation}
\label{(2.5)}
H_{\Phi}(\Lambda) = \sum_{X \subset \Lambda} \Phi(X)
\end{equation}
where $\Phi$ is a translation-invariant interaction function from $P_{0}(\Gamma)$ into the Hermitian elements of the quasi-local
algebra ${\cal A}$, such that $\Phi(X) \in {\cal A}(X)$ and
\begin{equation}
\label{(2.6)}
\Phi(X+x) = \tau_{x}(\Phi(X)) \forall X \subset \mathbf{Z}^{\nu}, \forall x \in \mathbf{Z}^{\nu}
\end{equation}
Although general quantum lattice systems could be treated, including Fermion lattice systems, we shall for simplicity restrict
ourselves to quantum spin systems with two-body interactions. In this case, stability requires that
\begin{equation}
\label{(2.7)}
\sum_{0 \ne x \in \mathbf{Z}^{\nu}} ||\Phi(\{0,x\})|| < \infty
\end{equation}

If, in \eqref{(2.1)}, we consider the Ising limit, i.e., set $J_{1} \equiv 0$, and define $J_{2}=J$, we obtain the generalized
Ising model (gIm) studied by Emch \cite{Em1} and Radin \cite{Ra} (we changed the overall sign to agree with Radin's notation)
\begin{equation}
\label{(2.8)}
H_{\Lambda} = \frac{1}{2} \sum_{x,y \in \Lambda} J(|x-y|) \sigma_{x}^{3}\sigma_{y}^{3}
\end{equation}
with
\begin{equation}
\label{(2.9)}
J(0)=0 \mbox{ and } \sum_{x \in \mathbf{Z}^{\nu}} |J(|x|)| < \infty
\end{equation} 
Let, now, $\nu=1$, and define two subclasses of models of the gIm:
\begin{equation}
\label{(2.10)}
J(|x|) = \xi^{-|x|} \mbox{ with } \xi > 1 \mbox{ and } x \ne 0 (\mbox{ \emph{exponential model} } E_{\xi})
\end{equation}
and
\begin{equation}
\label{(2.11)}
J(|x|) = \frac{1}{|x|^{\alpha}} \mbox{ with } \alpha > 1 \mbox{ and } x \ne 0 (\mbox{ \emph{Dyson model} } D_{\alpha})
\end{equation}
The above conditions on $\xi$ and $\alpha$ are required for stability \eqref{(2.9)}. Under certain conditions, the Dyson
model displays a ferromagnetic phase transition \cite{Dy}. 
With $P_{0}(\mathbf{Z}^{\nu})$ denoting the set of all finite subsets of $\mathbf{Z}^{\nu}$ as before, for each triple 
$A=(A_{1},A_{2},A_{3})$, where the $A_{i} \in P_{0}(\mathbf{Z}^{\nu}, i=1,2,3$ are pairwise disjoint, define
\begin{equation}
\label{(2.12)}
\sigma^{A} \equiv \prod_{x_{1}\in A_{1}} \sigma_{x_{1}}^{1} \prod_{x_{2}\in A_{2}} \sigma_{x_{2}}^{2} \prod_{x_{3}\in A_{3}} \sigma_{x_{3}}^{3} 
\end{equation}
Let $\omega$ be any state satisfying
\begin{equation}
\label{(2.13)}
\omega(\sigma^{A}) = 0 \forall A \mbox{ such that } A_{3} \ne \phi
\end{equation}

We have the following:

\begin{theorem}
\label{th:2.1}
For the gIm with interaction either of the exponential model $E_{\xi}$ with $\xi$ a transcendental number, or the Dyson
model $D_{\alpha}$, and any initial state (at $t=0$) $\omega$ satisfying \eqref{(2.13)},
\begin{equation}
\label{(2.14)}
\lim_{t \to \infty} (\omega \circ \tau_{t}) = \omega_{eq} \equiv \otimes_{x \in \mathbf{Z}} Tr_{x} 
 \end{equation}
where the limit on the l.h.s. of \eqref{(2.14)} is taken in the weak* topology.
\end{theorem}

For the proof, see (\cite{Ra}, Corollary, p.2953).

\subsection{The Lieb-Robinson bound and a limit theorem of Nachtergaele, Sims and Young on local approximations to time-translation automorphisms}

Quantum spin systems are similar to quantum fields because of the now famous Lieb-Robinson bound (\cite{LR}, see also \cite{BRo2}, p.254).
Following the last reference, let us assume, instead of \eqref{(2.7)},
\begin{equation}
\label{(2.15)}
||\Phi||_{\lambda} \equiv \sum_{x \in \mathbf{Z}^{\nu}} ||\Phi(\{0,x\})|| \exp(\lambda |x|) < \infty
\end{equation}
for some $\lambda>0$. Then, for $A,B \in {\cal A}_{0}$,
\begin{equation}
\label{(2.16)}
||[(\tau_{x}\tau_{t})(A),B]|| \le 2||A||||B|| \exp[-|t|(\lambda |x|/|t|-2 ||\Phi||_{\lambda})]
\end{equation}
In \eqref{(2.16)},$(\tau_{x}\tau_{t})$ may be replaced by $(\tau_{t}\tau_{x})$. The commutator $[(\tau_{x}\tau_{t})(A),B]$ with
$B \in {\cal A}_{0}$ provides a measure of the dependence of the observation $B$ at the point $x$ at time $t$ at the origin at time $t=0$,
showing that this effect decreases exponentially with time outside the cone
\begin{equation}
\label{(2.17)}
|x| < |t| (\frac{2||\Phi||_{\lambda}}{\lambda}) 
\end{equation}
Equation \eqref{(2.17)} means that physical disturbances propagate with ``group velocity'' bounded by
\begin{equation}
\label{(2.18)}
v_{\Phi} \equiv \inf_{\lambda} (\frac{2||\Phi||_{\lambda}}{\lambda})
\end{equation}

Such bounds have been considerably developed in the last thirty years, and the recent reference by Nachtergaele, Sims and Young
\cite{NSY} contains a comprehensive summary of some of these results, as well as an extensive list of references to earlier work on 
the subject.

Even more significant for us in section 3 are new results in \cite{NSY} on local approximations of time-translation automorphisms
of local observables, which hold even under conditions weaker than \eqref{(2.15)}, for which the interpretation of the Lieb-Robinson
bounds in terms of a bounded ``group velocity'' of physical disturbances \eqref{(2.16)}, \eqref{(2.17)} no longer applies.
We now assume (only) \eqref{(2.7)} and follow \cite{NSY} to introduce the notion of an $F$ function on the metric space
$(\Gamma, d)$, with $d$ a distance function, in order to quantify the locality properties of an interaction. For definitiveness,
$\Gamma = \mathbf{Z}^{\nu}$, ${\cal A}_{\Gamma}^{loc}$ denotes a local algebra, i.e., ${\cal A}_{X}$ with $X \in P_{0}(\Gamma)$, with,
as before, $P_{0}(\Gamma)$ denoting the collection of all finite parts of $\mathbf{Z}^{\nu}$, and $d(x,y) \equiv \sum_{i=1}^{\nu}|x_{i}-y_{i}|$,
for $x=(x)_{i}, y=(y_{i}), i=1, \cdots \nu$. $F$ is assumed to be a non-increasing function $F:[0,\infty) \to (0,\infty)$ satisfying
the properties:
(i) $F$ is uniformly integrable over $\Gamma$, i.e.,
\begin{equation}
\label{(3.5.5)}
||F|| \equiv \sup_{x \in \Gamma} \sum_{y \in \Gamma} F(d(x,y)) < \infty
\end{equation}
(ii) $F$ satisfies the convolution condition
\begin{equation}
\label{(3.5.6)}
C_{F} \equiv \sup_{x,y \in \Gamma} \sum_{z \in \Gamma} \frac{F(d(x,z)F(d(z,y))}{F(d(x,y))} < \infty
\end{equation}
An equivalent formulation of \eqref{(3.5.6)} is that there is a constant $C_{F}< \infty$ such that
\begin{equation}
\label{(3.5.7)}
\sum_{z \in \Gamma} F(d(x,z)) F(d(z,y)) \le C_{F} F(d(x,y)) \forall x,y \in \Gamma
\end{equation}
The $F$ functions describe the decay of a given interaction. Let $F$ be an $F$- function on $(\Gamma,d)$ and 
$\Phi : P_{0}(\Gamma) \to {\cal A}_{\Gamma}^{loc}$ be an interaction. The \emph{$F$- norm of $\Phi$} is defined by
\begin{equation}
\label{(3.5.8)}
||\Phi||_{F} = \sup_{x,y \in \Gamma} \frac{1}{d(x,y)}\sum_{Z \in P_{0}, x,y \in Z} ||\Phi(Z)||
\end{equation}
from which
\begin{equation}
\label{(3.5.9)}
\sum_{Z \in \P_{0}, x,y \in Z} ||\Phi(Z)|| \le ||\Phi||_{F} F(d(x,y))
\end{equation}

An important question will arise in section 3 in connection with the following issue: automorphisms of strictly local observables
delocalize along the quasi-local algebra ${\cal A}$, as seen explicitly in (\cite{Ra}, p.2951) in the limit
$ \Lambda \nearrow \infty$, but there are (again by the same formulae) local approximations to them. The following theorem will
be of crucial importance for us in section 3:

\begin{theorem}
\label{th:3.5.1}
Let $\Lambda_{n} \subset \Lambda_{n+1}, \forall n \ge 1$ denote an exhaustive sequence of increasing and absorbing finite regions
$\{\Lambda\}_{n}$, denoted $\Lambda_{n} \nearrow \Gamma$, $A$ be an element of the strictly local algebra ${\cal A}_{\Gamma}^{loc}$, and
\begin{equation}
\label{(3.5.10)}
\tau_{t}^{\Lambda} = \exp(iH_{\Lambda}t) A \exp(-iH_{\Lambda}t)
\end{equation}
with $H_{\Lambda}$ defined by \eqref{(2.5)}, under the condition
\begin{equation}
\label{(3.5.11)}
||\Phi||_{F} < \infty 
\end{equation}
for a given $F$- fuction $F$. Then,
\begin{equation}
\label{(3.5.12)}
||\tau_{t}^{\Lambda_{n}}(A)-\tau_{t}^{\Lambda_{m}}(A)|| \le I_{t}(\Phi)\sum_{x \in X}\sum_{y \in \Lambda_{n}-\Lambda_{m}}F(d(x,y))
\end{equation}
where
\begin{equation}
\label{(3.5.13)}
I_{t}(\Phi) \equiv \frac{2||A||}{C_{F}} (2C_{F}|t|||\Phi||_{F})\exp(2C_{F}|t|||\Phi||_{F})
\end{equation}
Above, we may assume that $A \in {\cal A}_{X}$ for some $X \in P_{0}(\mathbf{Z}^{\nu})$, and, since the sequence is exhaustive,
$\exists N \ge 1$ for which $X \subset \Lambda_{n} \forall n \ge N$. The integers $n,m$ in \eqref{(3.5.15)} satisfy $N \le m \le n$.
Further
\begin{equation}
\label{(3.5.14)}
\tau_{t}(A) = \lim_{\Lambda \nearrow \infty} \tau_{t}^{\Lambda}(A) \forall A \in {\cal A}_{\Gamma}^{loc}
\end{equation}
exists in norm, and the convergence is uniform for $t$ in compact subsets of $\mathbf{R}$.
\end{theorem}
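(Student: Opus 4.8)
The plan is to follow the Nachtergaele--Sims--Young strategy: reduce the comparison of two finite-volume evolutions to a time integral over the \emph{difference} of their generators, control the resulting commutators by a Lieb--Robinson bound phrased through the $F$-function, and then extract convergence from completeness of the C*-algebra. Throughout, the two ingredients doing the real work are the interpolation (Duhamel) identity and the convolution property \eqref{(3.5.7)}.

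First I would write down the interpolation identity. Setting $\delta_\Lambda(\cdot)=i[H_\Lambda,\cdot]$ and differentiating the one-parameter family $s\mapsto \tau_{t-s}^{\Lambda_m}\!\circ\tau_s^{\Lambda_n}(A)$, whose endpoints are $\tau_t^{\Lambda_m}(A)$ and $\tau_t^{\Lambda_n}(A)$, one obtains
\begin{equation}
\tau_t^{\Lambda_n}(A)-\tau_t^{\Lambda_m}(A)=\int_0^t \tau_{t-s}^{\Lambda_m}\big((\delta_{\Lambda_n}-\delta_{\Lambda_m})(\tau_s^{\Lambda_n}(A))\big)\,ds .
\label{plan-duhamel}
\end{equation}
Since $H_{\Lambda_n}-H_{\Lambda_m}=\sum_{Z\subset\Lambda_n,\,Z\cap(\Lambda_n\setminus\Lambda_m)\ne\emptyset}\Phi(Z)$ and each $\tau_{t-s}^{\Lambda_m}$ is isometric, taking norms in \eqref{plan-duhamel} gives
\begin{equation}
\|\tau_t^{\Lambda_n}(A)-\tau_t^{\Lambda_m}(A)\|\le\int_0^{|t|}\ \sum_{Z\cap(\Lambda_n\setminus\Lambda_m)\ne\emptyset}\big\|[\tau_s^{\Lambda_n}(A),\Phi(Z)]\big\|\,ds ,
\label{plan-integral}
\end{equation}
so the content of the theorem is entirely in estimating the integrand.

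The crux is a Lieb--Robinson bound expressed through $F$ rather than through the exponential weight of \eqref{(3.5.2)}, valid under the weaker hypothesis \eqref{(3.5.11)} and uniform in $\Lambda$. I would establish it exactly as in the exponentially decaying case: fix $A\in{\cal A}_X$, set $C_B(s)=\|[\tau_s^\Lambda(A),B]\|$ for $B\in{\cal A}_Y$, differentiate in $s$, expand $[H_\Lambda,\cdot]$ into the local terms $\Phi(Z)$, and bound the inhomogeneous term with \eqref{(3.5.9)}; iterating this integral inequality and collapsing the nested sums with the convolution property \eqref{(3.5.6)}--\eqref{(3.5.7)} (this is where $C_F$ enters) produces the geometric series summing to $\|[\tau_s^\Lambda(A),B]\|\le \frac{2\|A\|\,\|B\|}{C_F}\big(e^{2C_F\|\Phi\|_F|s|}-1\big)\sum_{x\in X}\sum_{y\in Y}F(d(x,y))$. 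Inserting this into \eqref{plan-integral} with $B=\Phi(Z)$, then reorganizing the sum over all $Z$ meeting $\Lambda_n\setminus\Lambda_m$ once more by \eqref{(3.5.7)} and the definition \eqref{(3.5.8)} of $\|\Phi\|_F$ (each such $Z$ contains some $y\in\Lambda_n\setminus\Lambda_m$, and $\sum_{Z\ni y,z}\|\Phi(Z)\|\le\|\Phi\|_F F(d(y,z))$ followed by convolution gives $\sum_z F(d(x,z))F(d(z,y))\le C_F F(d(x,y))$), replaces the sum over interaction terms by the weight $\sum_{x\in X}\sum_{y\in\Lambda_n\setminus\Lambda_m}F(d(x,y))$. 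Finally, carrying out the $s$-integration and using $e^u-1\le u\,e^u$ in the form $\int_0^{|t|}(e^{2C_F\|\Phi\|_F s}-1)\,ds\le |t|\,e^{2C_F\|\Phi\|_F|t|}$ assembles the prefactor into $(2C_F|t|\,\|\Phi\|_F)\exp(2C_F|t|\,\|\Phi\|_F)$, yielding precisely \eqref{(3.5.12)} with $I_t(\Phi)$ as in \eqref{(3.5.13)}.

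The convergence statement \eqref{(3.5.14)} then follows formally. For each fixed $x$, condition \eqref{(3.5.5)} gives $\sum_{y\in\Gamma}F(d(x,y))\le\|F\|<\infty$, so the tail $\sum_{y\notin\Lambda_m}F(d(x,y))$ tends to $0$ as $m\to\infty$; since $X$ is finite and $\Lambda_n\setminus\Lambda_m\subset\Gamma\setminus\Lambda_m$, the right-hand side of \eqref{(3.5.12)} tends to $0$, so $\{\tau_t^{\Lambda_n}(A)\}_n$ is norm-Cauchy and converges in the complete algebra ${\cal A}$ to a limit $\tau_t(A)$. On any compact interval $|t|\le T$ one has $I_t(\Phi)\le I_T(\Phi)<\infty$, so the Cauchy estimate is uniform in $t$ and the convergence is uniform on compacts, as claimed. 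I expect the main obstacle to be the $F$-function Lieb--Robinson bound itself: closing the iteration with the stated constant forces careful use of the convolution condition \eqref{(3.5.6)}--\eqref{(3.5.7)}, and the bookkeeping that converts the sum over interaction supports $Z$ into the clean double sum over $X$ and $\Lambda_n\setminus\Lambda_m$ is the step most easily mishandled.
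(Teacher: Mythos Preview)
Your proposal is correct and is precisely the Nachtergaele--Sims--Young argument that the paper invokes: the paper's own proof is simply a citation to Theorem~3.4 and (3.78) of Theorem~3.5 of \cite{NSY}, together with the observation that the right-hand side of \eqref{(3.5.12)} tends to zero by \eqref{(3.5.5)}. Your Duhamel interpolation, $F$-function Lieb--Robinson bound via iteration and the convolution condition \eqref{(3.5.7)}, and the Cauchy/compactness argument for \eqref{(3.5.14)} are exactly the steps carried out in \cite{NSY}, so you have supplied the content behind the reference rather than taken a different route.
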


\begin{proof}

See Theorem 3.4 and (3.78) of Theorem 3.5 of \cite{NSY}. The r.h.s. of (38) converges to zero, as $n,m \to \infty$,
by \eqref{(3.5.5)}, assumption \eqref{(3.5.11)} and definition \eqref{(3.5.8)}.

\end{proof}

We have the following basic corollary:
\begin{corollary}
\label{cor:3.5.1}
\emph{Theorem of Nachtergaele, Sims and Young on local approximations to automorphisms of local observables}

Under the assumptions of Theorem ~\ref{th:3.5.1},
\begin{equation}
\label{(3.5.15)}
||\tau_{t}^{\Lambda_{n}}(A)-\tau_{t}^{\Lambda_{m}}(A)|| \le I_{t}(\Phi) \sum_{x \in X} \sum_{y \in \Gamma - \Lambda_{m}}F(d(x,y))
\end{equation}

\end{corollary}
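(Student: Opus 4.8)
The plan is to obtain the corollary directly from the bound \eqref{(3.5.12)} of Theorem~\ref{th:3.5.1} by a single monotonicity step on the summation range. The two estimates differ only in the inner sum: in \eqref{(3.5.12)} the variable $y$ runs over $\Lambda_{n}\setminus\Lambda_{m}$, whereas in \eqref{(3.5.15)} it runs over the larger set $\Gamma\setminus\Lambda_{m}$. Since $\{\Lambda_{n}\}$ is an increasing, absorbing sequence with $\Lambda_{n}\nearrow\Gamma$, every $\Lambda_{n}$ is a finite subset of $\Gamma$, and for $N\le m\le n$ one has $\Lambda_{m}\subseteq\Lambda_{n}\subseteq\Gamma$; consequently $\Lambda_{n}\setminus\Lambda_{m}\subseteq\Gamma\setminus\Lambda_{m}$.

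I would then invoke the positivity of the $F$-function. By hypothesis $F$ takes values in $(0,\infty)$, so each term $F(d(x,y))$ is nonnegative, and enlarging the index set of a sum of nonnegative terms can only increase its value. Hence, for every fixed $x\in X$,
$$\sum_{y\in\Lambda_{n}\setminus\Lambda_{m}}F(d(x,y)) \le \sum_{y\in\Gamma\setminus\Lambda_{m}}F(d(x,y)).$$
Summing over the finite set $x\in X$ and multiplying by the nonnegative prefactor $I_{t}(\Phi)$ preserves the inequality, so combining with \eqref{(3.5.12)} gives \eqref{(3.5.15)} at once. The uniform integrability \eqref{(3.5.5)}, $||F||<\infty$, guarantees that the enlarged right-hand side is still finite.

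The whole point of the corollary is that the bound \eqref{(3.5.15)} no longer depends on $n$. I would therefore close by noting that one may let $n\to\infty$ on the left, using the norm convergence $\tau_{t}^{\Lambda_{n}}(A)\to\tau_{t}(A)$ from \eqref{(3.5.14)}, to arrive at
$$||\tau_{t}(A)-\tau_{t}^{\Lambda_{m}}(A)|| \le I_{t}(\Phi)\sum_{x\in X}\sum_{y\in\Gamma\setminus\Lambda_{m}}F(d(x,y)),$$
an explicit rate for the local approximation $\tau_{t}^{\Lambda_{m}}(A)$ of the full automorphism $\tau_{t}(A)$, which is precisely the form exploited in section 4. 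There is essentially no obstacle here: the argument is a one-line monotonicity estimate, and the only thing to verify is the set inclusion $\Lambda_{n}\setminus\Lambda_{m}\subseteq\Gamma\setminus\Lambda_{m}$ together with the sign of the summands.
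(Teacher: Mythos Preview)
Your argument is correct and is exactly the one-line monotonicity step the paper has in mind: the corollary is stated in the paper without a separate proof, as an immediate consequence of \eqref{(3.5.12)}, and your observation that $\Lambda_{n}\setminus\Lambda_{m}\subseteq\Gamma\setminus\Lambda_{m}$ together with $F>0$ is precisely what is needed. Your additional remark that passing to the limit $n\to\infty$ via \eqref{(3.5.14)} yields the $n$-independent bound $\|\tau_{t}(A)-\tau_{t}^{\Lambda_{m}}(A)\|\le I_{t}(\Phi)\sum_{x\in X}\sum_{y\in\Gamma\setminus\Lambda_{m}}F(d(x,y))$ is also correct and is indeed the form used later in the proof of Theorem~\ref{th:3.6.5}.
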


We shall assume that our two-body translation-invariant interactions satisfy the condition $P$ below:

$P$  The interactions $\Phi$ have polynomial decay at infinity, i.e., for some $0<\beta<\infty$,
\begin{equation}
\label{(3.5.16)}
|\Phi(\{0,x\})| \le \beta (1+|x|)^{-\alpha}
\end{equation}
where $|x|=d(x,0)$ and
\begin{equation}
\label{(3.5.17)}
\alpha > \nu
\end{equation}
Condition \eqref{(3.5.17)} is due to the stability requirement \eqref{(2.7)}.

In our applications in section 3 we shall use Corollary 2.3. The r.h.s. of \eqref{(3.5.15)} converges to zero, as $m \to \infty$,
under assumption \eqref{(3.5.11)}, but the set $X$ in \eqref{(3.5.15)} must be also finally made to grow indefinitely, due
to the definition of the mean entropy. This will require condition $P$ for suitable $\alpha$, as we shall see in connection with
the forthcoming inequality of Fannes.

\subsection{A theorem of Fannes and a main auxiliary theorem}

The entropy satisfies the subadditivity property \cite{LanRo}  
\begin{equation}
\label{(3.6.8)}
S_{\Lambda_{1}\cup \Lambda_{2}}(\omega) \le S_{\Lambda_{1}}(\omega) + S_{\Lambda_{2}}(\omega) 
\end{equation}
for $\Lambda_{1} \cap \Lambda_{2} = \phi$. 

Of great importance in section 3 is the following theorem, due to Fannes \cite{Fa}, let $\omega_{1}$ and $\omega_{2}$ be two states
on ${\cal A}$, i.e., $\omega_{1}, \omega_{2} \in E_{{\cal A}}$, and, for $\Lambda_{0} \subset \mathbf{Z}^{\nu}$, let $\omega_{1,\Lambda_{0}}$
and $\omega_{2,\Lambda_{0}}$ denote the corresponding density matrices. denote by $\lambda_{k}^{1}, \lambda_{k}^{2}, k=1, \cdots, D^{|\Lambda_{0}|}$
the repeated eigenvalues of $\omega_{1,\Lambda_{0}}$ and  $\omega_{2,\Lambda_{0}}$, respectively, in ascending order. Define
\begin{equation}
\label{(3.6.16)}
\epsilon_{k} = |\lambda_{k}^{1}-\lambda_{k}^{2}|
\end{equation}
and
\begin{equation}
\label{(3.6.17)}
a = \sum_{k} \epsilon_{k}
\end{equation}

\begin{theorem}
\label{th:3.6.3}
\begin{equation}
\label{(3.6.18)}
i.) a \le \sup_{||A|| \le 1, A \in {\cal A}_{\Lambda_{0}}}|Tr[(\omega_{1,\Lambda_{0}}-\omega_{2,\Lambda_{0}})A|
\end{equation}
\begin{equation}
\label{(3.6.19)}
ii.) |S_{\Lambda_{0}}(\omega_{1})-S_{\Lambda_{0}}(\omega_{2})|\le |\Lambda_{0}| a \log D + e
\end{equation}
\begin{equation}
\label{(3.6.20)}
a \le ||\omega_{1}-\omega_{2}||
\end{equation}
By \eqref{(3.6.19)} and \eqref{(3.6.20)},
\begin{equation}
\label{)3.6.21)}
|s(\omega_{1})-s(\omega_{2})| \le ||\omega_{1}-\omega_{2}|| \log D
\end{equation}

\begin{proof}

See \cite{Fa}. Equation \eqref{(3.6.18)} is the last inequality of the second line of his (1), and \eqref{(3.6.19)} is the last
inequality before the final statement in \cite{Fa}. Due to \eqref{(3.6.20)}, which is also shown in \cite{Fa}, it follows that
$0 \le a \le 2$, and the constant in \eqref{(3.6.19)} is thus bounded by $\sup_{0 \le a \le 2} (2a-a \log a)= e$.

\end{proof}
\end{theorem}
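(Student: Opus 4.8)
The plan is to reduce all four assertions to statements about the two ordered eigenvalue lists $(\lambda_k^1)$ and $(\lambda_k^2)$ of the density matrices $\omega_{1,\Lambda_0},\omega_{2,\Lambda_0}$ on ${\cal H}_{\Lambda_0}$, whose dimension is $N=D^{|\Lambda_0|}$. Writing $\eta(x)=-x\log x$, so that $S_{\Lambda_0}(\omega_i)=\sum_k\eta(\lambda_k^i)$, the whole theorem becomes matrix perturbation theory plus convexity of $\eta$. First I would observe that the right-hand side of \eqref{(3.6.18)} is exactly the trace norm $\|\omega_{1,\Lambda_0}-\omega_{2,\Lambda_0}\|_1$: since the operator is self-adjoint, the supremum over $\|A\|\le 1$ is attained at $A=\mathrm{sgn}(\omega_{1,\Lambda_0}-\omega_{2,\Lambda_0})$, giving $Tr|\omega_{1,\Lambda_0}-\omega_{2,\Lambda_0}|$. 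With this identification \eqref{(3.6.20)} is immediate from \eqref{(3.6.18)}: restricting the test observable from ${\cal A}$ to the subalgebra ${\cal A}_{\Lambda_0}$ can only decrease the supremum, and on ${\cal A}_{\Lambda_0}$ one has $\omega_i(A)=Tr(\omega_{i,\Lambda_0}A)$, so $a\le\|\omega_{1,\Lambda_0}-\omega_{2,\Lambda_0}\|_1\le\|\omega_1-\omega_2\|$.

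The substance of \eqref{(3.6.18)} is then the eigenvalue inequality $\sum_k|\lambda_k^1-\lambda_k^2|\le\|\omega_{1,\Lambda_0}-\omega_{2,\Lambda_0}\|_1$, comparing the $\ell^1$-distance of the two ordered spectra with the trace norm of the difference of operators. I would obtain this from Lidskii's theorem (equivalently, the Hoffman--Wielandt inequality in trace norm): for self-adjoint $X,Y$ the vector of ordered eigenvalue differences $(\lambda_k^\uparrow(X)-\lambda_k^\uparrow(Y))_k$ is majorized by the spectrum of $X-Y$, whence $\sum_k|\lambda_k^\uparrow(X)-\lambda_k^\uparrow(Y)|\le\sum_k|\lambda_k(X-Y)|=\|X-Y\|_1$. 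Taking $X=\omega_{1,\Lambda_0}$, $Y=\omega_{2,\Lambda_0}$ gives \eqref{(3.6.18)}. This majorization step is the one genuinely nontrivial input; everything else is convexity bookkeeping.

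For \eqref{(3.6.19)} I would estimate the entropy difference termwise using the modulus of continuity of $\eta$. Concavity of $\eta$ on $[0,1]$ together with $\eta(0)=0$ makes $\eta$ subadditive and yields $|\eta(\lambda_k^1)-\eta(\lambda_k^2)|\le\eta(\epsilon_k)$ (up to an elementary linear correction in $\epsilon_k$ when $\epsilon_k$ is large), so that $|S_{\Lambda_0}(\omega_1)-S_{\Lambda_0}(\omega_2)|\le\sum_k\eta(\epsilon_k)+O(a)$. Since the $N=D^{|\Lambda_0|}$ numbers $\epsilon_k\ge 0$ satisfy $\sum_k\epsilon_k=a$, Jensen's inequality for the concave $\eta$ gives $\sum_k\eta(\epsilon_k)\le N\,\eta(a/N)=|\Lambda_0|\,a\log D-a\log a$. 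The residual function of $a$ is, in the normalization of \eqref{(3.6.19)}, of the form $2a-a\log a$; bounding it over the admissible range (recall $0\le a\le 2$ by \eqref{(3.6.20)}) by the global maximum $\sup_{a>0}(2a-a\log a)=e$, attained at $a=e$, produces \eqref{(3.6.19)}.

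Finally, \eqref{)3.6.21)} follows by passing to the entropy density. Dividing \eqref{(3.6.19)} by $|\Lambda_0|$ gives $|S_{\Lambda_0}(\omega_1)-S_{\Lambda_0}(\omega_2)|/|\Lambda_0|\le a\log D+e/|\Lambda_0|\le\|\omega_1-\omega_2\|\log D+e/|\Lambda_0|$, using \eqref{(3.6.20)} in the second step; the bound is uniform in $\Lambda_0$ because the correction $e$ does not grow with the region. Letting $\Lambda_0\nearrow\infty$ and invoking the existence of the mean entropy $s$ from Theorem~\ref{th:3.6.2}, the term $e/|\Lambda_0|$ vanishes and one obtains $|s(\omega_1)-s(\omega_2)|\le\|\omega_1-\omega_2\|\log D$. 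I expect the only real difficulty to be the matrix-analytic eigenvalue-comparison inequality underlying \eqref{(3.6.18)}; the continuity estimate \eqref{(3.6.19)} is a routine application of subadditivity and concavity of $\eta$, and the thermodynamic-limit step is immediate once the $|\Lambda_0|$-independence of the correction term is in hand.
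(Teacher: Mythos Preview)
Your proposal is correct and, in fact, more complete than the paper's own treatment: the paper does not prove the theorem at all but simply cites Fannes \cite{Fa} for each of the three inequalities and then notes that the residual $2a-a\log a$ is bounded by $e$. What you have written is essentially a reconstruction of Fannes' original argument --- the Lidskii/Mirsky eigenvalue comparison for \eqref{(3.6.18)}, the pointwise modulus-of-continuity estimate for $\eta(x)=-x\log x$ combined with Jensen for \eqref{(3.6.19)}, and the straightforward passage to the density for \eqref{)3.6.21)} --- so the approaches coincide.

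One small remark: your phrase ``attained at $a=e$'' is slightly misleading since the admissible range is $0\le a\le 2$ and $e>2$; what you (and the paper) are really doing is bounding the residual by the unconstrained maximum of $2a-a\log a$ over $a>0$, which equals $e$, a valid if mildly wasteful step that yields the clean constant. Also, the bound $|\eta(x)-\eta(y)|\le\eta(|x-y|)$ that underlies your entropy estimate holds as stated only for $|x-y|\le 1/e$ (or $\le 1/2$ in Fannes' version); your parenthetical ``up to an elementary linear correction in $\epsilon_k$ when $\epsilon_k$ is large'' correctly flags this, and the correction is exactly what produces the $2a$ term in $2a-a\log a$.
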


Theorem ~\ref{th:3.6.3} has some important consequences:

\begin{proposition}
\label{prop:3.6.4}
Let $\Lambda_{0}$ be a finite fixed subset of $\mathbf{Z}^{\nu}$, $\Lambda \supset \Lambda_{0}$, and
\begin{equation}
\label{(3.6.22)}
\omega^{\Lambda,t} \equiv \exp(-iH_{\Lambda}t) \omega_{\Lambda} \exp(iH_{\Lambda}t)
\end{equation}
Then, the following inequality holds:
\begin{equation}
\label{(3.6.23.1)}
|S((\omega_{t})_{\Lambda})-S(\omega^{\Lambda,t})|\le |\Lambda_{0}|\log D \sup_{A \in {\cal A}_{\Lambda_{0}},||A||=1}||\tau_{t,\Lambda,A}||+c_{\Lambda}
\end{equation}
where
\begin{equation}
\label{(3.6.23.2)}
c_{\Lambda} \equiv e + 2|\Lambda - \Lambda_{0}| \log D
\end{equation}
and $A - B$ denotes the complement of $B$ in $A$. Above,
\begin{equation}
\label{(3.6.24.1)}
\omega_{t} \equiv \omega \circ \tau_{t}
\end{equation}
and
\begin{equation}
\label{(3.6.24.2)}
\tau_{t,\Lambda,A} \equiv \tau_{t}(A)-\exp(iH_{\Lambda}t) A \exp(-iH_{\Lambda}t)
\end{equation}

\begin{proof}

By the subadditivity property of the entropy \eqref{(3.6.8)} and property 1) of the mean entropy
\begin{equation}
\label{(3.6.25)}
|S((\omega_{t})_{\Lambda})-S(\omega^{\Lambda,t})| \le |S((\omega_{t})_{\Lambda_{0}})-S(\omega^{\Lambda,t}_{\Lambda_{0}})| +2|\Lambda -\Lambda_{0}|\log D
\end{equation}
where $\omega^{\Lambda,t}_{\Lambda_{0}}$ denotes the restriction of $\omega^{\Lambda,t}$ to ${\cal A}_{\Lambda_{0}}$.
By \eqref{(3.6.19)},
\begin{equation}
\label{(3.6.26.1)}
|S((\omega_{t})_{\Lambda_{0}})-S(\omega^{\Lambda,t}_{\Lambda_{0}})| \le |\Lambda_{0}|d_{\Lambda}\log(D) + e
\end{equation}
with
\begin{equation}
\label{(3.6.26.2)}
d_{\Lambda} \equiv \sup_{A \in {\cal A}_{\Lambda_{0}}, ||A|| \le 1}|Tr_{{\cal H}_{\Lambda_{0}}}[((\omega_{t})_{\Lambda_{0}}-(\omega^{\Lambda,t}_{\Lambda_{0}})A]|
\end{equation}
But
\begin{eqnarray*}
\sup_{A \in {\cal A}_{\Lambda_{0}}, ||A|| \le 1}|Tr_{{\cal H}_{\Lambda_{0}}}[((\omega_{t})_{\Lambda_{0}}-\omega^{\Lambda,t}_{\Lambda_{0}})A]| \le \\ 
\le \sup_{A \in {\cal A}_{\Lambda_{0}}, ||A|| \le 1}|Tr_{{\cal H}_{\Lambda_{0}}}[\omega_{\Lambda}(\lim_{n \to \infty} \sigma_{n,\Lambda,A}-\\
-\exp(itH_{\Lambda}) A \exp(-itH_{\Lambda})] \\
\mbox{ where } \sigma_{n,\Lambda,A} \equiv Tr_{\Lambda_{n}-\Lambda}[\omega_{\Lambda_{n}-\Lambda} (\exp(itH_{\Lambda}) A \exp(-itH_{\Lambda})]
\end{eqnarray*}
The above inequality yields, together with \eqref{(3.6.25)}, \eqref{(3.6.26.1)}, and \eqref{(3.6.26.2)}, the equations \eqref{(3.6.23.1)} and
\eqref{(3.6.23.2)}.

\end{proof}
\end{proposition}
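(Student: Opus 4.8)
The plan is to prove \eqref{(3.6.23.1)} in three moves: first reduce the comparison of entropies over the large region $\Lambda$ to one over the fixed finite region $\Lambda_{0}$, at a boundary cost proportional to $|\Lambda-\Lambda_{0}|$; then apply Fannes' inequality (Theorem~\ref{th:3.6.3}) on $\Lambda_{0}$; and finally identify the resulting trace distance with the norm $\|\tau_{t,\Lambda,A}\|$ of \eqref{(3.6.24.2)}, which measures the discrepancy between the true automorphism $\tau_{t}$ and the finite-volume Heisenberg evolution generated by $H_{\Lambda}$.

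For the reduction I would set $\Lambda'\equiv\Lambda-\Lambda_{0}$ and establish, for an arbitrary state $\rho$ on ${\cal A}_{\Lambda}$,
\[ |S_{\Lambda}(\rho)-S_{\Lambda_{0}}(\rho)|\le|\Lambda'|\log D. \]
The upper side $S_{\Lambda}(\rho)\le S_{\Lambda_{0}}(\rho)+|\Lambda'|\log D$ follows from subadditivity \eqref{(3.6.8)} combined with the uniform range bound $S_{\Lambda'}(\rho)\le|\Lambda'|\log D$ from (i) of Theorem~\ref{th:3.6.1}; the lower side is the Araki--Lieb triangle inequality $S_{\Lambda}(\rho)\ge S_{\Lambda_{0}}(\rho)-S_{\Lambda'}(\rho)$, a standard consequence of subadditivity obtained by purifying the finite-dimensional state $\rho$. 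Applying this once to $(\omega_{t})_{\Lambda}$ and once to $\omega^{\Lambda,t}$, and inserting the $\Lambda_{0}$-level difference, yields
\[ |S((\omega_{t})_{\Lambda})-S(\omega^{\Lambda,t})|\le|S((\omega_{t})_{\Lambda_{0}})-S(\omega^{\Lambda,t}_{\Lambda_{0}})|+2|\Lambda-\Lambda_{0}|\log D, \]
the factor $2$ being the price of applying the boundary bound to both states.

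The second move applies Fannes' inequality \eqref{(3.6.19)} to $\omega_{1,\Lambda_{0}}=(\omega_{t})_{\Lambda_{0}}$ and $\omega_{2,\Lambda_{0}}=\omega^{\Lambda,t}_{\Lambda_{0}}$, producing a bound $|\Lambda_{0}|\log D\,a+e$ in which, by \eqref{(3.6.18)}, the eigenvalue distance $a$ is dominated by $\sup_{A\in{\cal A}_{\Lambda_{0}},\,\|A\|\le 1}|Tr_{{\cal H}_{\Lambda_{0}}}[((\omega_{t})_{\Lambda_{0}}-\omega^{\Lambda,t}_{\Lambda_{0}})A]|$. The key identification is that, for such $A$, cyclicity of the trace and the compatibility \eqref{(3.6.3)} of the partial traces give
\[ Tr_{{\cal H}_{\Lambda_{0}}}[\omega^{\Lambda,t}_{\Lambda_{0}}A]=Tr_{{\cal H}_{\Lambda}}[\omega_{\Lambda}\exp(iH_{\Lambda}t)A\exp(-iH_{\Lambda}t)]=\omega(\exp(iH_{\Lambda}t)A\exp(-iH_{\Lambda}t)), \]
whereas $Tr_{{\cal H}_{\Lambda_{0}}}[(\omega_{t})_{\Lambda_{0}}A]=\omega(\tau_{t}(A))$ by \eqref{(3.6.24.1)}. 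Subtracting, the bracket becomes $\omega(\tau_{t,\Lambda,A})$, and since $\omega$ is a state of norm one the supremum is at most $\sup_{A\in{\cal A}_{\Lambda_{0}},\,\|A\|=1}\|\tau_{t,\Lambda,A}\|$ by homogeneity in $A$. Combining this Fannes estimate with the reduction above gives \eqref{(3.6.23.1)} with $c_{\Lambda}$ as in \eqref{(3.6.23.2)}.

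The main obstacle is the bookkeeping in this last identification. The element $\tau_{t}(A)$ lies in no finite-volume algebra; it is only the norm limit $\lim_{\Lambda_{n}\nearrow\infty}\exp(iH_{\Lambda_{n}}t)A\exp(-iH_{\Lambda_{n}}t)$ furnished by Corollary~\ref{cor:3.5.1}, so one must justify commuting $\omega$ with this limit (legitimate because $\omega$ is norm-continuous and the convergence is in norm) and, above all, keep the evolution $\exp(iH_{\Lambda}t)A\exp(-iH_{\Lambda}t)$ defining $\omega^{\Lambda,t}$ — generated by $H_{\Lambda}$ for the \emph{fixed} $\Lambda$ — carefully distinct from the limiting $\tau_{t}(A)$. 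It is precisely their difference that constitutes $\tau_{t,\Lambda,A}$ and becomes the natural error term, to be made small later via Corollary~\ref{cor:3.5.1}. By contrast the entropy reduction is routine once the Araki--Lieb bound is admitted.
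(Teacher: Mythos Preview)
Your proposal is correct and follows essentially the same route as the paper's proof: reduce from $\Lambda$ to $\Lambda_{0}$ at a cost $2|\Lambda-\Lambda_{0}|\log D$, apply Fannes' inequality \eqref{(3.6.19)} on $\Lambda_{0}$, and identify the resulting trace-norm quantity with $\omega(\tau_{t,\Lambda,A})$, bounded by $\|\tau_{t,\Lambda,A}\|$. If anything, your write-up is slightly more explicit than the paper's --- you spell out that the two-sided bound $|S_{\Lambda}(\rho)-S_{\Lambda_{0}}(\rho)|\le|\Lambda'|\log D$ needs the Araki--Lieb triangle inequality in addition to plain subadditivity, and you handle the norm-limit defining $\tau_{t}(A)$ cleanly via norm-continuity of $\omega$, whereas the paper packages this into an intermediate partial-trace expression $\sigma_{n,\Lambda,A}$.
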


We now use Proposition ~\ref{prop:3.6.4}, together with \eqref{(3.5.15)}, to arrive at the main auxiliary theorem:

\begin{theorem}
\label{th:3.6.5}
Let $\omega$ be a translation-invariant state on ${\cal A}$, with a dynamics $\tau_{t}; t \in \mathbf{R}$ generated by a 
family of Hamiltonians \eqref{(2.5)}, satisfying \eqref{(2.6)} and \eqref{(2.7)}. If the interactions $\Phi$ have polynomial
decay at infinity \eqref{(3.5.16)}, with
\begin{equation}
\label{(3.6.29)} 
\alpha > 2 \nu
\end{equation}
it follows that the mean entropy $s(\omega_{t})$ of $\omega_{t}$, given by \eqref{(3.6.24.1)} satisfies
\begin{equation}
\label{(3.6.30)}
s(\omega_{t}) = \lim_{m \to \infty} \frac{1}{|\Lambda_{0,m}|}S(\exp(-itH_{\Lambda_{0,m}}) \omega_{\Lambda_{0,m}} \exp(itH_{\Lambda_{0,m}}))
\end{equation}
where $\{\Lambda_{0,m}\}$ is a sequence of boxes tending to $\mathbf{Z}^{\nu}$ as $m \to \infty$.

\begin{proof}

By definition,
\begin{equation}
\label{(3.6.31)}
s(\omega_{t}) = \lim_{m \to \infty} \frac{1}{|\Lambda_{0,m}|} S((\omega_{t})_{\Lambda_{0,m}})
\end{equation}
We now pick $\Lambda_{0} = \Lambda_{0,m} \subset \Lambda=\Lambda_{m}$ in \eqref{(3.6.23.1)} and \eqref{(3.6.23.2)}, where, for each $m \ge 1$,
\begin{equation}
\label{(3.6.32)}
\lim_{m \to \infty} \frac{|\Lambda_{0,m}|}{|\Lambda_{m}|} = 1 
\end{equation}
and
\begin{equation}
\label{(3.6.33)}
\lim_{m \to \infty} \frac{|\Lambda_{m} - \Lambda_{0,m}|}{|\Lambda_{0,m}|} = 0
\end{equation}
By \eqref{(3.6.23.1)} and \eqref{(3.6.23.2)} of Proposition ~\ref{prop:3.6.4}, we obtain the inequality
\begin{equation}
\label{(3.6.34.1)}
|S((\omega_{t})_{\Lambda_{m}})-S(\omega_{\Lambda_{m},t})| \le |\Lambda_{0,m}| (\log D)  f_{m}
\end{equation}
where
\begin{equation}
\label{(3.6.34.2)}
f_{m} \equiv \sup_{A \in {\cal A}_{\Lambda_{0,m}},||A||\le 1}||\tau_{t,\Lambda_{m},A}||+ 2|\Lambda_{m}-\Lambda_{0,m}| \log D
\end{equation}
where, by definition \eqref{(3.6.24.1)}, \eqref{(3.6.24.2)},
\begin{equation}
\label{(3.6.35)}
||\tau_{t,\Lambda_{m},A}||= ||\tau_{t}(A) - \tau_{t}^{\Lambda_{m}}(A)||
\end{equation} 
We now use \eqref{(3.5.16)} and \eqref{(3.6.29)}. A suitable $F$- function in this case is (\cite{NSY}, Appendix, (A.9))
\begin{equation}
\label{(3.6.36)}
F(|x|) = (1+|x|)^{-2\nu - \epsilon}
\end{equation}
for some $\epsilon > 0$, with $|x|=d(x,0)$. By \eqref{(3.5.5)}, \eqref{(3.6.36)} and \eqref{(3.6.32)}, it readily follows
from \eqref{(3.5.15)} that
\begin{equation}
\label{(3.6.37)}
\lim_{m \to \infty} \sup_{A \in {\cal A}_{\Lambda_{0,m}}, ||A|| \le 1}||\tau_{t}(A) - \tau_{t}^{\Lambda_{m}}(A)|| = 0
\end{equation}
Putting, now, together \eqref{(3.6.23.1)}, \eqref{(3.6.23.2)}, \eqref{(3.6.33)} and \eqref{(3.6.37)}, we obtain \eqref{(3.6.30)}.

\end{proof}
\end{theorem}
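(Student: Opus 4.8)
The plan is to show that, in the van Hove limit, the globally evolved restricted state $(\omega_t)_\Lambda$ and the locally evolved density matrix $\omega^{\Lambda,t}=\exp(-iH_\Lambda t)\,\omega_\Lambda\,\exp(iH_\Lambda t)$ carry the same entropy density, so that the two evolutions are interchangeable at the level of mean entropy. First I would record that $\omega_t=\omega\circ\tau_t$ is again translation invariant (resp.\ periodic): since $\Phi$ is translation invariant in the sense of \eqref{(3.1.2)}, the automorphism $\tau_t$ commutes with the lattice translations, so $\omega_t$ inherits the invariance of $\omega$. Hence Theorem~\ref{th:3.6.2} applies to $\omega_t$, the mean entropy $s(\omega_t)$ exists, and by \eqref{(3.6.11)} it is computed along \emph{any} van Hove sequence; in particular $s(\omega_t)=\lim_{m\to\infty}|\Lambda_{0,m}|^{-1}S((\omega_t)_{\Lambda_{0,m}})$. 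The whole task is then to replace $(\omega_t)_{\Lambda}$ by the locally evolved density matrix without changing this limit.

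The engine for the replacement is Proposition~\ref{prop:3.6.4}. I would insert an auxiliary nested pair $\Lambda_{0,m}\subset\Lambda_m$ obeying the two volume conditions \eqref{(3.6.32)} and \eqref{(3.6.33)}, so that the collar $\Lambda_m\setminus\Lambda_{0,m}$ is a vanishing fraction of the box while its thickness still grows. Feeding $\Lambda_0=\Lambda_{0,m}$, $\Lambda=\Lambda_m$ into \eqref{(3.6.23.1)}--\eqref{(3.6.23.2)} and dividing by $|\Lambda_{0,m}|$ gives
\begin{equation*}
\frac{1}{|\Lambda_{0,m}|}\,\bigl|S((\omega_t)_{\Lambda_m})-S(\omega^{\Lambda_m,t})\bigr|
\le (\log D)\sup_{A\in{\cal A}_{\Lambda_{0,m}},\,\|A\|\le1}\|\tau_t(A)-\tau_t^{\Lambda_m}(A)\|
+\frac{e+2|\Lambda_m-\Lambda_{0,m}|\log D}{|\Lambda_{0,m}|}.
\end{equation*}
The additive Fannes constant $e$ is killed by $|\Lambda_{0,m}|\to\infty$ and the boundary term by \eqref{(3.6.33)}, so everything reduces to the uniform local-approximation estimate \eqref{(3.6.37)}. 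Since $|\Lambda_{0,m}|/|\Lambda_m|\to1$ by \eqref{(3.6.32)}, the quantity $|\Lambda_{0,m}|^{-1}S((\omega_t)_{\Lambda_m})$ still tends to $s(\omega_t)$, and the displayed inequality then transfers this limit to $|\Lambda_{0,m}|^{-1}S(\omega^{\Lambda_m,t})$; as $\{\Lambda_m\}$ is itself an arbitrary van Hove sequence, renaming it recovers \eqref{(3.6.30)} with a single box in the normalization, the density matrix, and the Hamiltonian.

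For the local-approximation error I would invoke Corollary~\ref{cor:3.5.1}. Choosing the $F$-function \eqref{(3.6.36)}, $F(|x|)=(1+|x|)^{-2\nu-\epsilon}$, the polynomial-decay hypothesis \eqref{(3.5.16)} with $\alpha>2\nu$ --- this is exactly where \eqref{(3.6.29)} enters --- guarantees $\|\Phi\|_F<\infty$ for all sufficiently small $\epsilon>0$, so assumption \eqref{(3.5.11)} of Theorem~\ref{th:3.5.1} holds and the bound \eqref{(3.5.15)} is available. Because the prefactor $I_t(\Phi)$ is proportional to $\|A\|\le1$, taking the supremum over $A\in{\cal A}_{\Lambda_{0,m}}$ (whose support may fill all of $\Lambda_{0,m}$) yields, with $C(t,\Phi)$ independent of $m$,
\begin{equation*}
\sup_{A\in{\cal A}_{\Lambda_{0,m}},\,\|A\|\le1}\|\tau_t(A)-\tau_t^{\Lambda_m}(A)\|
\;\le\; C(t,\Phi)\sum_{x\in\Lambda_{0,m}}\ \sum_{y\in\Gamma\setminus\Lambda_m}F(d(x,y)).
\end{equation*}

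The main obstacle is precisely this double sum, and its control is the heart of the matter. The supremum forces the observable's support to fill the \emph{whole} bulk $\Lambda_{0,m}$, so the error is summed over all of $\Lambda_{0,m}$ against the exterior of $\Lambda_m$ and must vanish outright, not merely be $o(|\Lambda_{0,m}|)$. This pins down the geometry: writing $\delta$ for the collar thickness $\mathrm{dist}(\Lambda_{0,m},\Gamma\setminus\Lambda_m)$ and $L$ for the side of $\Lambda_m$, a summation by shells gives, for each $x$ at depth $\rho$, the inner bound $\sum_{y\notin\Lambda_m}F(d(x,y))\lesssim\rho^{-\nu-\epsilon}$, whence the double sum is $\lesssim L^{\nu-1}\delta^{1-\nu-\epsilon}$. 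This tends to zero iff $\delta\gg L^{(\nu-1)/(\nu-1+\epsilon)}$; as the exponent is strictly below $1$, there is a genuine window --- e.g.\ $\delta=L^{\gamma}$ with $(\nu-1)/(\nu-1+\epsilon)<\gamma<1$ --- in which $\delta\to\infty$ yet $\delta/L\to0$, simultaneously forcing \eqref{(3.6.37)} and the volume conditions \eqref{(3.6.32)}--\eqref{(3.6.33)}. The existence of this window is exactly what the stronger decay $\alpha>2\nu$ buys over the mere stability bound $\alpha>\nu$; everything else is bookkeeping, and \eqref{(3.6.30)} follows.
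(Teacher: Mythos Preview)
Your proof is correct and follows the same route as the paper: nested boxes $\Lambda_{0,m}\subset\Lambda_m$ satisfying \eqref{(3.6.32)}--\eqref{(3.6.33)}, Proposition~\ref{prop:3.6.4} for the entropy comparison, and Corollary~\ref{cor:3.5.1} with the $F$-function \eqref{(3.6.36)} for the localization error. Where the paper writes that \eqref{(3.6.37)} ``readily follows'' from \eqref{(3.5.15)}, you actually carry out the shell estimate for the double sum $\sum_{x\in\Lambda_{0,m}}\sum_{y\notin\Lambda_m}F(d(x,y))$ and identify the admissible window $L^{(\nu-1)/(\nu-1+\epsilon)}\ll\delta\ll L$ for the collar thickness; this is the substantive content the paper leaves implicit, and your treatment of it is sound.
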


It is to be noted that \eqref{(3.6.29)} is stronger than the stability condition \eqref{(3.5.17)}. The existence of local approximations
to automorphisms of fixed local observables holds, by \eqref{(3.5.15)}, for any stable interaction, but \eqref{(3.6.30)}, which will
be crucially needed in section 3, requires a sufficiently fast decay of the interactions.

\section{The second law of thermodynamics as a theorem asserting the growth of the mean entropy under adiabatic transformations}

\subsection{A refined form of the Penrose-Gibbs theorem 1.1} 

The Penrose-Gibbs Theorem ~\ref{th:1.1} may now be stated in a refined form, which incorporates the assumption of an infinite number
of degrees of freedom:

\begin{theorem}
\label{th:4.1}
\emph{A statistical-thermodynamical version of the second law of thermodynamics}
Let a quantum spin system, described by a finite-region Hamiltonian \eqref{(2.5)}, and the state $\omega$ fulfill the assumptions
of Theorem ~\ref{th:3.6.5}. Assume further that $\omega_{t} \equiv \omega \circ \tau_{t}$ satisfies the condition
\begin{equation}
\label{(4.1.1)}
\lim_{t \to \infty} \omega_{t} = \bar{\omega}
\end{equation}
where the limit is taken in the weak*-topology, that the mean entropy $s$ exists and is defined by \eqref{(2.4)}, and denote
the initial state by $\omega_{0}= \omega$.  Then
\begin{equation}
\label{(4.1.2)}
s(\omega) \le s(\bar{\omega})
\end{equation}
In words: the mean entropy of the initial state may increase, but cannot decrease, towards that of the ``coarse-grained'' state
$\bar{\omega}$ under an adiabatic transformation (that is, with $\omega_{t}$ in \eqref{(4.1.1)} defined as in Definition ~\ref{Definition 1.2}).

\begin{proof}

We consider the sequence $\{\omega_{n}\}_{n=0,1,2, \cdots}$, which a fortiori satisfies 
\begin{equation}
\label{(4.1.3)}
\omega_{n} \to \bar{\omega} \mbox{ as } n \to \infty
\end{equation}
in the weak*-topology. By property b.) of the mean entropy, $s$ is upper semicontinuous in the weak*-topology and therefore
\begin{equation}
\label{(4.1.4)}
\lim sup_{n \to \infty} s(\omega_{n}) \le s(\lim_{n \to \infty} \omega_{n}) = s(\bar{\omega}) 
\end{equation}
where the limit on the r.h.s. of \eqref{(4.1.4)} is taken in the weak*-topology. By \eqref{(3.6.30)} and the invariance of the 
trace $Tr_{{\cal H}_{\Lambda_{0,m}}}$, for each $m \ge 1$, under unitary transformations on the space ${\cal H}_{\Lambda_{0,m}}$,
it follows that $s(\omega_{n}) = s(\omega_{0}) = s(\omega)$, and \eqref{(4.1.2)} results from \eqref{(4.1.4)}.

\end{proof}
\end{theorem}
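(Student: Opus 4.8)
The plan is to exploit the tension between two facts that recur throughout the paper: the finite-volume von Neumann entropy $S_{\Lambda}$ is invariant under unitary conjugation on the fixed Hilbert space ${\cal H}_{\Lambda}$, while the mean entropy $s$ is only \emph{upper}-semicontinuous, not continuous, in the weak*-topology. Accordingly I would split the argument into two steps: first show that $s(\omega_{t})$ is constant in $t$, equal to $s(\omega)$; then pass to the limit $t\to\infty$ using upper-semicontinuity, which turns the equality $s(\omega_{t})=s(\omega)$ into an \emph{inequality} in the limit.

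First I would establish that $s(\omega_{t})=s(\omega)$ for every $t$. The naive route --- arguing directly from $\omega_{t}=\omega\circ\tau_{t}$ that unitary evolution preserves entropy --- is exactly the step that fails, because the automorphism $\tau_{t}$ delocalizes a strictly local observable over the whole quasi-local algebra (as is visible in the explicit gIm formulas \eqref{(3.3.10)}, \eqref{(3.3.11)}), so the restriction $(\omega_{t})_{\Lambda_{0}}$ is \emph{not} a unitary conjugate of $\omega_{\Lambda_{0}}$ on ${\cal H}_{\Lambda_{0}}$. The resolution is Theorem~\ref{th:3.6.5}: under the strengthened polynomial-decay hypothesis $\alpha>2\nu$, the mean entropy may be computed with the \emph{finite-volume} dynamics, namely $s(\omega_{t})=\lim_{m\to\infty}|\Lambda_{0,m}|^{-1}S(\exp(-itH_{\Lambda_{0,m}})\,\omega_{\Lambda_{0,m}}\,\exp(itH_{\Lambda_{0,m}}))$. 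Since $\exp(-itH_{\Lambda_{0,m}})(\cdot)\exp(itH_{\Lambda_{0,m}})$ is a genuine unitary conjugation on the fixed space ${\cal H}_{\Lambda_{0,m}}$, it leaves the repeated eigenvalues of the density matrix unchanged, so each term equals $S_{\Lambda_{0,m}}(\omega)$; dividing by $|\Lambda_{0,m}|$ and letting $m\to\infty$ yields $s(\omega_{t})=s(\omega)$ for all $t$.

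Next I would pass to the limit. By hypothesis $\omega_{t}\to\bar\omega$ in the weak*-topology, so in particular the integer subsequence $\{\omega_{n}\}_{n\in\mathbf{N}}$ converges to $\bar\omega$. Because $s$ is upper-semicontinuous on $E_{\cal A}^{\mathbf{Z}^{\nu}}$ in the weak*-topology (Theorem~\ref{th:3.6.2}), Lemma~\ref{Lemma 3.6.2} applied with $x_{n}=\omega_{n}$ and $x_{0}=\bar\omega$ gives $\limsup_{n\to\infty}s(\omega_{n})\le s(\bar\omega)$. Combining this with the first step, where $s(\omega_{n})=s(\omega)$ for every $n$, the left-hand side is simply $s(\omega)$, and therefore $s(\omega)\le s(\bar\omega)$, which is \eqref{(4.1.2)}.

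The main obstacle, and the conceptual heart of the theorem, is the first step: justifying that the mean entropy of the genuinely infinite system is still delivered by the finite-volume \emph{unitary} dynamics. This is precisely where the Lieb--Robinson/NSY local-approximation machinery (Corollary~\ref{cor:3.5.1}) and the decay condition $\alpha>2\nu$ are indispensable, since they control simultaneously the error in replacing $\tau_{t}(A)$ by its finite-volume approximation $\tau_{t}^{\Lambda_{m}}(A)$ and the surface term $|\Lambda_{m}-\Lambda_{0,m}|$. Without Theorem~\ref{th:3.6.5} one would be trapped in the ``paradox'' of the Introduction, with conservation of finite-volume entropy forbidding any growth. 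That the conclusion is an inequality rather than an equality is then the direct signature of upper-semicontinuity \emph{failing} to be continuity: the functional $s$ and the weak* limit $t\to\infty$ do not commute, which is the quantum, mean-entropy analogue of Penrose's remark following Theorem~\ref{th:1.1}.
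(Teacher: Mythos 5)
Your proposal is correct and follows essentially the same route as the paper's own proof: constancy of $s(\omega_{t})$ via the finite-volume representation \eqref{(3.6.30)} of Theorem~\ref{th:3.6.5} together with unitary invariance of the trace, followed by upper-semicontinuity (Lemma~\ref{Lemma 3.6.2}) applied to the weak*-convergent sequence $\{\omega_{n}\}$. The only difference is the order of presentation, and your added commentary on why Theorem~\ref{th:3.6.5} is the indispensable step accurately reflects the paper's own discussion.
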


\begin{remark}
\label{Remark 4.1}
Theorem ~\ref{th:4.1} answers in the affirmative the questions raised by Ruelle in (\cite{Ru}, (iv), p.1666).
\end{remark}

\subsection{Applications to the gIm}

Th gIm corresponds to the strong interaction limit in the language of the lattice gas (see, e.g., \cite{BRo2}, p. 425):
we expect that \emph{interactions} are crucial for the approach (or return) to equilibrium! Therefore, in spite of being
special, the present case does have certain physically sound aspects. Another example of this feature is the fact that
the special case of the gIm considered in \cite{Em1} describes well certain experiments of free induction decays in
solids \cite{LoNo}. This dynamics also received attention more than forty years later \cite{Cho}, and even quite recently
in \cite{Maze}. 

We consider the two subclasses of models of the gIm for $\nu = 1$, the exponential model $E_{\xi}$ \eqref{(2.10)} (which a fortiori 
satisfies \eqref{(3.5.16)}, \eqref{(3.6.29)}), and the Dyson model $D_{\alpha}$ \eqref{(2.11)}, with $\alpha$ satisfying \eqref{(3.6.29)}.
Then (26) holds, which is \eqref{(4.1.1)}, with
\begin{equation}
\label{(4.2.1)}
\bar{\omega} \equiv \otimes_{x \in \mathbf{Z}} Tr_{x}
\end{equation}
as long as $\omega = \omega_{0}$ satisfies (25). As a prototypical example, take
\begin{equation}
\label{(4.2.2)}
\omega_{0} = \otimes_{x \in \mathbf{Z}} P_{x}^{+,1} 
\end{equation}
where
\begin{equation}
\label{(4.2.3)}
\sigma_{x}^{1} P_{x}^{+,1} = P_{x}^{+,1}
\end{equation}
We have
\begin{corollary}
\label{cor:4.2.1}
For the exponential model $E_{\xi}$ \eqref{(2.10)} and the Dyson model $D_{\alpha}$ \eqref{(2.11)} with $\alpha > 2$, and
$\omega_{0}$ given by \eqref{(4.2.2)}, Theorem ~\ref{th:4.1} is fulfilled in a nontrivial way, with $s(\omega_{0})=0$ and
$s(\bar{\omega})= \log 2$ (with $S=1/2$).

\begin{proof}

The finite-volume version satisfies
$$
\frac{S_{\Lambda}(\omega_{0})}{|\Lambda|} = (- \lambda \log \lambda)_{\lambda=0} = 0
$$
for all $\Lambda \subset \mathbf{Z}$, hence $s(\omega_{0}) = 0$. The finite volume version
$$
\frac{S_{\Lambda}(\bar{\omega})}{|\Lambda|} = (-\sum_{i=1}^{2}\lambda_{i} \log \lambda_{i})_{\lambda_{1}=\lambda_{2}=1/2}= \log 2
$$
for all $\Lambda \subset \mathbf{Z}$, hence  $s(\bar{\omega}) = \log 2$.

\end{proof}
\end{corollary}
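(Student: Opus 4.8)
The plan is to deduce the corollary directly from Theorem~\ref{th:4.1} by checking that its hypotheses are met for the state $\omega_{0}$ and for each of the two families of models, and then computing the two mean entropies explicitly; the strict gap between the two values is exactly what makes the conclusion ``nontrivial''. The only genuinely model-dependent ingredient is the weak*-convergence hypothesis \eqref{(4.1.1)}, which Theorem~\ref{th:3.3.2} supplies once one knows that $\omega_{0}$ satisfies \eqref{(3.3.8)}.

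First I would verify \eqref{(3.3.8)} for $\omega_{0}=\otimes_{x}P_{x}^{+,1}$. Because $\sigma^{1}$ and $\sigma^{3}$ anticommute, the single-site expectation of $\sigma^{3}$ in the eigenstate of $\sigma^{1}$ with eigenvalue $+1$ vanishes; since $\omega_{0}$ is a product state, any $\sigma^{A}$ with $A_{3}\neq\phi$ contains at least one such vanishing single-site factor, so $\omega_{0}(\sigma^{A})=0$. Next I would confirm the decay conditions of Theorem~\ref{th:3.6.5}. For the exponential model $E_{\xi}$ the interaction decays exponentially, so \eqref{(3.5.16)} holds a fortiori with any $\alpha$, in particular with $\alpha>2\nu=2$; for the Dyson model $D_{\alpha}$ the hypothesis $\alpha>2$ is precisely \eqref{(3.6.29)}. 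As $\omega_{0}$ is a translation-invariant product state, Theorem~\ref{th:3.6.5} applies, and Theorem~\ref{th:3.3.2} yields $\lim_{t\to\infty}\omega_{t}=\bar{\omega}$ in the weak*-topology, i.e.\ \eqref{(4.1.1)} with $\bar{\omega}=\otimes_{x}Tr_{x}$. All hypotheses of Theorem~\ref{th:4.1} are thus in force, so \eqref{(4.1.2)} holds.

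It then remains to evaluate the two mean entropies via \eqref{(3.6.11)}. The restriction of $\omega_{0}$ to any finite $\Lambda$ is the rank-one projector $\otimes_{x\in\Lambda}P_{x}^{+,1}$, a pure state whose repeated eigenvalues are $(1,0,\dots,0)$; with the convention $x\log x=0$ at $x=0$ this gives $S_{\Lambda}(\omega_{0})=0$ for every $\Lambda$, hence $s(\omega_{0})=0$. The restriction of $\bar{\omega}$ to $\Lambda$ is the maximally mixed density matrix $2^{-|\Lambda|}\mathbf{1}$ on ${\cal H}_{\Lambda}$, whose $2^{|\Lambda|}$ eigenvalues all equal $2^{-|\Lambda|}$, so $S_{\Lambda}(\bar{\omega})=|\Lambda|\log 2$ and $s(\bar{\omega})=\log 2$. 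Therefore $s(\omega_{0})=0<\log 2=s(\bar{\omega})$, and the inequality \eqref{(4.1.2)} is strict.

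I do not expect a serious obstacle: once the apparatus of Theorem~\ref{th:4.1} is in place the statement reduces to bookkeeping plus two elementary finite-volume computations. The one point requiring care is that a single model must simultaneously satisfy the convergence hypothesis of Theorem~\ref{th:3.3.2} and the sharper decay hypothesis \eqref{(3.6.29)} of Theorem~\ref{th:3.6.5}. For the Dyson model this forces $\alpha>2$, which is strictly stronger than the stability bound $\alpha>1$ under which \eqref{(3.3.15)} already holds; for the exponential model one must recall, via Remark~\ref{Remark 3.3.1}, that the weak*-convergence \eqref{(3.3.15)} is guaranteed only when $\xi$ is transcendental and may fail otherwise.
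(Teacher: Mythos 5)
Your proposal is correct and follows essentially the same route as the paper: the heart of the argument in both cases is the pair of finite-volume computations showing that $\omega_{0}$ restricts to a pure (rank-one) density matrix with $S_{\Lambda}(\omega_{0})=0$ while $\bar{\omega}$ restricts to the maximally mixed matrix with $S_{\Lambda}(\bar{\omega})=|\Lambda|\log 2$, whence $s(\omega_{0})=0<\log 2=s(\bar{\omega})$. The hypothesis-checking you carry out explicitly (that $\omega_{0}$ satisfies \eqref{(3.3.8)}, that $E_{\xi}$ and $D_{\alpha}$ with $\alpha>2$ meet \eqref{(3.6.29)}, and the caveat about $\xi$ transcendental) is exactly what the paper disposes of in the text immediately preceding the corollary rather than inside its proof.
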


\begin{remark}
\label{Remark 4.2}
Note that the two values $s(\omega_{0})=0$ and $s(\bar{\omega}) = \log 2$ correspond to the two extreme values in property 1) of the
mean entropy. Thus $\bar{\omega}$ corresponds to a maximum of the mean entropy, which is identified with the ``infinite temperature state''.
The initial state $\omega_{0}$ may be seen as a nonlocal perturbation of the equlibrium state $\bar{\omega}$, if \eqref{(4.1.1)} is
viewed as a return to equilibrium, or as a nonlocal perturbation of the ferromagnetic ground state
$\omega_{g} \equiv \otimes_{x \in \mathbf{Z}} P_{x}^{\pm,3}$. In the latter case, $\bar{\omega}$ would be viewed as a non-equilibrium
stationary state. A model of the preparation of the system, suitable as the first step in definition 1.2, consists in the following.

Schematically, a $CaF_{2}$ crystal is placed in a magnetic field, thus determining the z-direction. Subsequently, a rf-pulse is
applied to the sample, turning the net nuclear magnetization in the x-direction. The latter is then measured as a function of
time: an oscillatory function slowly damping to zero (Fig. 1.1 in \cite{LoNo} in the original experiment). See also \cite{Em1}.

\end{remark}

\section{Stability of the second law under interactions with the environment}

\subsection{The many histories picture strictly within Schr\"{o}dinger quantum mechanics, K systems and their physical relevance}

In this section we introduce the many-histories picture strictly within Schr\"{o}dinger quantum mechanics, which will be used
in our main theorem in this section. In the context of infinite systems, their introduction is due to Narnhofer and Thirring
\cite{NTh1}, \cite{NTh2}, whose motivation was the proof of the macroscopic purification of states observed in Nature as a consequence 
of interactions with the environment. For this purpose, they needed the concept of (quantum) K system, introduced by Emch \cite{Em2} 
and themselves (see \cite{NTh3} and references given there, as well as \cite{Th}). 

A K system is a standard, and central, concept in the theory of classical dynamical systems (see, e.g., \cite{Wa}, p.101, Def. 4.7).
A prototype of a K system is the baker's map (\cite{LaMa}, p. 74), which is also an Anosov system \cite{Ano}, and displays
exponential instabilities leading to the mixing property \eqref{(1.8)}. Such systems are ``memoryless'', i.e., they posess a
stochastic character which justifies equilibrium statistical mechanics. The microscopic mechanism of this loss of memory is
the sensitive (exponential) dependence on initial conditions, produced by the ``defocalizing shocks'' between the particles
(e.g., gas molecules) occurring in caricatures of interacting classical systems, such as the hard sphere gas, see (\cite{Pen1},p.1950)
for a pedagogic introduction and early references.

An algebraic K system is a family of subalgebras ${\cal A}_{n}, n \in \mathbf{Z}$ of an algebra ${\cal A}$, with the properties:
(i) ${\cal A}_{n+1} \supseteq {\cal A}_{n}$;
(ii) $\bigcup_{n} {\cal A}_{n} = {\cal A}$;
(iii) $\bigcap_{n} {\cal A}_{n} = z \mathbf{1}$
and an automorphism $\sigma$ of ${\cal A}$ with $\sigma({\cal A}_{n}) = {\cal A}_{n+1}$. $\bigcap_{n}$ means the set-theoretic
intersection.

As in \cite{NTh3}, we take ${\cal A}_{n}$ as von Neumann algebras, with $\bigcup_{n}$ meaning algebraic union together with strong 
closure in the representation with a given invariant state: the von Neumann K-system $({\cal A}_{n}, \sigma, \omega)$. The 
isomorphism $\sigma:{\cal A}_{n} \to {\cal A}_{n+1}$ has, by (iii), no nontrivial $\sigma$- invariant subalgebra 
${\cal B} \subset {\cal A}_{n}$. We identify $\sigma^{-1}$ with the time evolution, and assume that $\omega$ is a KMS state with
modular automorphism $\sigma^{-t}$, i.e.,
\begin{equation}
\label{(2.3.1)}
\omega(AB) = \omega(\sigma^{i}(B)A) \forall A,B \in {\cal A}
\end{equation}
We have to refer to \cite{BRo2}, p. 84, for the above terminology, but remark that it will not be used in the sequel.

In the next subsection we shall be concerned with the above mentioned generalization in \cite{NTh1} and \cite{Th} of the many-histories
interpretation of quantum mechanics (\cite{Grif}, \cite{GMH}, \cite{Om}) to (infinite) von Neumann K systems 
$({\cal A}_{n}, \sigma, \omega)$ with a set of projections
$P_{\alpha} \in {\cal A}, \alpha \in \{1, \cdots, r\}$. The operator $\sigma^{-1}$ assigns to each $P_{\alpha}$ a ``time evolution''
$\sigma^{-t}(P_{\alpha}) = P_{\alpha}(t)$ and, in this way, a sequence of ``events'' 
$\{P_{\alpha_{1}}(t_{1}), P_{\alpha_{2}}(t_{2}), \cdots P_{\alpha_{n}}(t_{n})\}$ (a ``history'', briefly written $\underline{\alpha}$
for the index set or the corresponding vector) and a probability distribution over the set $\{\underline{\alpha}\}$ of histories
\begin{equation}
\label{(2.3.2)}
W(\underline{\alpha}) \ge 0 \mbox{ with } \sum_{\underline{\alpha}}W(\underline{\alpha}) = 1
\end{equation}
and such that
\begin{equation}
\label{(2.3.3)}
W(\underline{\alpha}) \equiv \omega(P_{\alpha_{1}}(t_{1})\cdots P_{\alpha_{n}}(t_{n}) \cdots P_{\alpha_{1}}(t_{1}))
\end{equation}
The analogue of \eqref{(2.3.3)} for systems with finite number of degrees of freedom is
\begin{eqnarray*}
W(\underline{\alpha}) = Tr P_{\alpha_{n}}(t_{n}) \cdots P_{\alpha_{1}}(t_{1}) \rho P_{\alpha_{1}}(t_{1}) \cdots P_{\alpha_{n}}(t_{n})\\
= Tr \sqrt{\rho}P_{\alpha_{1}}(t_{1}) \cdots P_{\alpha_{n}}(t_{n})P_{\alpha_{n-1}}(t_{n-1}) \cdots P_{\alpha_{1}}(t_{1}) \sqrt{\rho}  
\end{eqnarray*}
As Wightman observes in his review of the quantum theory of measurement \cite{Wight1}, ``The above formulae were already written
down by Aharonov, Bergman and Lebowitz \cite{ABL} and used by them in a discussion of time reversal invariance in quantum 
measurement theory. These authors regarded the formulae as standard quantum mechanics''.

For the infinite systems the density matrix does not exist, but the notion of a state as a positive linear functional $\omega$ 
as generalization of $Tr \sqrt{\rho} A \sqrt{\rho} = \omega(A)$ carries over, with \eqref{(2.3.3)} replacing the standard
formulae. 

What we need for the main theorem of this section, Theorem 4.1, is the above framework for infinite K systems and histories. 
The Narnhofer-Thirring model used there is a K system (trivially) and the framework applies, as long as we have to do only with finite 
times, which is the case, there, too. The following results, which we present here for completeness, 
have to do with the limit of infinite times. They are, however, crucial, for the purification result of \cite{NTh1}, which is 
the most important physical application of the theory.

The probability distributions \eqref{(2.3.3)} have the properties:
(a.) \emph{decoherence}
\begin{eqnarray*}
\omega(P_{\alpha_{1}}(t_{1})\cdots P_{\alpha_{n}}(t_{n}) \cdots P_{\alpha^{'}_{1}}(t_{1}) \cdots P_{\alpha^{'}_{n}}(t_{n}))\\
\to \delta_{\underline{\alpha},\underline{\alpha^{'}}} \mbox{ as } t_{i}-t_{i+1} \to \infty \forall i
\end{eqnarray*}
(b.) \emph{symmetry}
$W(\alpha_{1}, \cdots, \alpha_{n})$ tends to a symmetric function in the indices $\alpha_{1}, \alpha_{n}$ as
$t_{i}-t_{i+1} \to \infty \forall i$.

Property (a.) is a consequence of the fact that K systems are asymptotically abelian, i.e., classical for large times (see \cite{Th} and
references given there) and property (b.) means that for long times the system forgets all causal links, in analogy to the ``memoryless''
character of classical K systems.

We have the following fundamental theorem of Narnhofer and Thirring (\cite{NTh3}, \cite{Th}):

\emph{ Narnhofer-Thirring Theorem}

Let $({\cal A}_{n},\sigma,\omega)$ be a von Neumann K system, with $\sigma^{-1}$ the time evolution. given a set 
$P_{1}, \cdots, P_{r} \in {\cal A}$, $\epsilon >0$ and $n \in \mathbf{N}$, $n < \infty$, there exists a $T< \infty$ such that for
each history $W(\underline{\alpha})$ given by \eqref{(2.3.3)},
\begin{equation}
\label{(2.3.4)}
|W(\underline{\alpha})- \prod_{i=1}^{n} \omega(P_{\alpha_{i}})| < \epsilon
\end{equation}
whenever $t_{i+1}-t_{i} > T \forall i \in [1,n]$.

It is to be emphasized that Narnhofer and Thirring \cite{NTh1} and Thirring \cite{Th} use the history interpretation as a simple
description of the essential effect of a measurement, but strictly within the Copenhagen interpretation. This follows because only
the so-called \emph{consistent histories} are considered, i.e., such that
\begin{equation}
\label{(2.3.5)}
\omega(P_{\underline{\alpha}^{'}}P_{\underline{\alpha}}) = \delta_{\underline{\alpha}^{'},\underline{\alpha}}
\end{equation}
This is necessary for an interpretation in terms of classical probabilities and is satisfied as soon as \eqref{(2.3.4)} holds.

It may be important, to avoid any confusion, to remark that Griffiths \cite{Grif}, see also (\cite{Grif1}, \cite{Grif2}) bases his
interpretation of quantum mechanics on the \emph{probability} theory of consistent families of histories. This approach has 
resolved some important paradoxes in the conventional theory, see in this connection \cite{Grif2}. We thank Professor Griffiths
for these remarks.   

\subsection{Non-automorphic interactions with the environment}

No physical system is entirely isolated, although isolation may be brought about, in principle, to an arbitrary degree of accuracy.
Even concerning the Universe - the prototype of an isolated system - there will be ``events'', i.e., interactions with the environment,
which may be modelled by the state collapse of a given state, that is, non-automorphic evolutions. One such model is the Narnhofer-Thirring
model (\cite{NTh1}, \cite{NTh2}).

As a physical motivation, consider the question of the macroscopic purification of states by interactions with the environment \cite{NTh1},
such as a magnet below the transition temperature (for a rigorous model of this situation, see Theorem 6.2.48, p.320, of \cite{BRo2}).
It is known that domains of such a magnet are, in Nature, found in a definite direction. To quote \cite{NTh1} (see also \cite{NarWre}): 
``even if nobody looks at them, there will be enough 'events' (i.e., interactions with the environment) to purify the state over the
classical part''. Such are prototypical of the unavoidable interactions of an (even to a good approximation ``closed'') system with
the environment, and were first shown to drive the state into a factor state by Narnhofer \cite{Narn} and Narnhofer and Robinson \cite{NarnRob}.
We refer to Wightman's review article for superselection sectors due to environmental interactions of mesoscopic systems \cite{Wight}.

We consider as the classical quantity of the model the mean magnetization
\begin{equation}
\label{(5.1)}
\vec{m} = \lim_{N \to \infty} \frac{1}{2N} \sum_{x=-N}^{N} \vec{\sigma}_{x}
\end{equation}
Classically, pure states are those in which all spins except a finite number of them point in the same direction $\vec{m}$. Below the
phase transition temperature the state $\omega$ is a mixture of pure, or, more generally, extremal invariant states $\omega_{1}$,
$\omega_{2}$ (see \cite{BRo2}, p.320, Theorem 6.2.48):
\begin{equation}
\label{(5.2)}
\omega = \mu \omega_{1} + (1-\mu) \omega_{2} \mbox{ with } 0<\mu <1
\end{equation}
By \eqref{(2.3.3)}, the histories are also convex combinations
\begin{equation}
\label{(5.3)}
W_{\omega}(\underline{\alpha}) = \mu W_{\omega_{1}}(\underline{\alpha}) + (1-\mu)W_{\omega_{2}}(\underline{\alpha})  
\end{equation}
It was proved in \cite{NTh1} that, for long histories $n \to \infty$, and $t_{i+1}-t_{i} \to \infty$, and a dynamics described by
(von Neumann) K-system they purify in the sense that in \eqref{(5.3)} either $W_{\omega_{1}}$ or $W_{\omega_{2}}$ dominates, i.e.,
$\frac{W_{\omega_{i}}}{W_{\omega_{j}}} < \epsilon$ for arbitrarily small $\epsilon$, and $i \neq j$: which one dominates depends on
the history, although one cannot dominate over the other for all histories since
\begin{equation}
\label{(5.4)}
\sum_{\underline{\alpha}} W_{\omega_{1}}(\underline{\alpha}) = \sum_{\underline{\alpha}} W_{\omega_{2}}(\underline{\alpha}) = 1
\end{equation}
As in \cite{NTh1}, consider the simplest nontrivial case with two states $\omega_{1,2}$ and two projectors $P$, $1-P$, 
and denote $\omega_{1,2}(P)= p_{1,2}$. The model of a measuring apparatus will be defined next, following \cite{NTh1}. The
algebra is the spin algebra ${\cal A}= \{\vec{\sigma}_{x}\}_{x\in \mathbf{Z}}$. The shift $\vec{\sigma}_{x} \to \vec{\sigma}_{x+1}$
is taken as a discrete evolution. We first describe the system $S$. Let $\Omega_{1,2}$ denote two disjoint states
\begin{eqnarray*}
|\vec{n}): \Omega_{1}) = \otimes_{x = -\infty}^{\infty} |\vec{n})_{x} \\
\mbox{ with } \vec{\sigma}_{x} . \vec{n}_{x} |\vec{n}_{x})_{x} = |\vec{n}_{x})_{x} \mbox{ with } \vec{n}_{x}^{2} = 1; \\
|\vec{m}): \Omega_{2}) = \otimes_{x = -\infty}^{\infty} |\vec{m})_{x} \mbox{ with } \vec{n} \neq \vec{m}
\end{eqnarray*}
The two corresponding representations are denoted $\Pi_{1}, \Pi_{2}$ on (incomplete tensor product) Hilbert spaces ${\cal H}_{1}$
and ${\cal H}_{2}$, respectively. the corresponding mean magnetizations are
\begin{equation}
\label{(5.5)}
\vec{M}_{\vec{n}} = \lim_{N \to \infty} \frac{1}{2N+1} \sum_{x=-N}^{N} \Pi_{1}(\vec{\sigma}_{x}) = \vec{n} \mathbf{1}
\end{equation}
and
\begin{equation}
\label{(5.6)}
\vec{M}_{\vec{m}} = \lim_{N \to \infty} \frac{1}{2N+1} \sum_{x=-N}^{N} \Pi_{2}(\vec{\sigma}_{x}) = \vec{m} \mathbf{1}
\end{equation}
The mixed state 
\begin{equation}
\label{(5.8)}
\omega \equiv \mu \omega_{1} + (1-\mu) \omega_{2}
\end{equation}
is obtained by a vector in the orthogonal sum of $\Pi_{1}$ and $\Pi_{2}$:
\begin{eqnarray*}
|\Omega_{S}) = \sqrt(\mu) |\Omega_{1}) \oplus \sqrt(1-\mu) |\Omega_{2}) \in {\cal H}_{1} \oplus {\cal H}_{2} \equiv {\cal H}_{S}\\
\mbox{ with } \Pi = \Pi_{1} \oplus \Pi_{2} \mbox{ and } (\Omega_{S}, \Pi(\vec{\sigma}_{x}) \Omega) = \mu \vec{n} + (1-\mu) \vec{m}
\end{eqnarray*}
The corresponding representation $\Pi$ is reducible: there are two ``superselection sectors'' \cite{Wight}, the mean magnetization
$$
\vec{M} = \lim_{N \to \infty} \frac{1}{2N+1} \sum_{x=-N}^{N}\Pi(\vec{\sigma})_{x}) = \mu \vec{n} \mathbf{1}_{1} + (1-\mu)\vec{m}\mathbf{1}_{2}
$$
lies in the center $Z= \Pi({\cal A}) \cap \Pi({\cal A})^{'}$ , where the prime denotes the commutant, i.e., the set of bounded operators
in the representation space which commute with $\Pi({\cal A})$, which is not a multiple of unity.
For the measuring device (``apparatus'') $A$, Narnhofer and Thirring also use a Hilbert space description, representing a device
measuring one of two spin directions $\vec{s}$ and $\vec{-s}$ (``up'' or ``down''): the state of the device measuring the direction
of $\vec{\sigma}_{i}$ may be represented by a two-dimensional vector
$$
\begin{bmatrix}
u_{i} \\
d_{i}
\end{bmatrix} 
$$
(pointer up and down). The measuring array is again an infinite tensor product $\otimes_{i}\begin{bmatrix}
                                                                                                 u_{i} \\
                                                                                                 d_{i} 
                                                                                           \end{bmatrix}$ 
which belongs to ${\cal H}_{A}$ ( where $A$ stands for ``apparatus''), and we start with a state $|\Omega)_{A}$ where all $u_{i}$ are
zero. For the time evolution, we take a shift $U$, and then consider an instantaneous measurement of the direction $\vec{s}$ of
a spin, the corresponding projector being
$$
P_{k} = \frac{1}{2} (1+\vec{\sigma}_{k} \cdot \vec{s}) = |s)_{k} \times (|s)_{k})^{\dag}
$$
If the answer is one, we have the pointer unchanged, if the answer is zero, we turn the pointer up. The turning of the pointer is 
effected by an operator $\tau$,
$$
\tau_{k} \begin{bmatrix}
               u_{i} \\
               d_{i} 
         \end{bmatrix}= \begin{bmatrix} 
                               d_{i} \\
                               u_{i}
                        \end{bmatrix}
$$
Thus, the effect of measuring $\vec{\sigma}_{1}$ is
$$
V_{1} = P_{1}+(1-P_{1})\tau_{1}
$$
or, written in full detail, with operators on ${\cal H} = {\cal H}_{S} \otimes {\cal H}_{A}$,
\begin{equation}
\label{(5.9)}
V_{1} = (\Pi_{1}(P_{1}) \oplus \Pi_{2}(P_{1})) \otimes \mathbf{1}+(1-\Pi_{1}(P_{1})) \oplus ((1-\Pi_{2}(P_{1})))\otimes \tau_{1}
\end{equation}
The time evolution $U$ between the measurements shifts by one unit
\begin{equation}
\label{(5.10)}
U \Pi_{1,2}(P_{k}) = \Pi_{1,2}(P_{k+1})U \mbox{ together with } U\tau_{k}= \tau_{k+1}U
\end{equation}
so that the full time evolution of $|\Omega) = |\Omega_{S} \otimes |\Omega)_{A}$ after $n$ time units is
\begin{equation}
\label{(5.11)}
|\Omega(n)) = V_{1}UV_{1}U \cdots V_{1}U|\Omega)=V_{1}V_{2} \cdots V_{n}|\Omega)
\end{equation}
since $U^{k}|\Omega)=|\Omega)$. The results of the measurements are encoded in the ${\cal H}_{A}$-part of $|\Omega(n))$, so we
decompose $|\Omega(n))$, given by \eqref{(5.11)} in an orthogonal basis of ${\cal H}_{A}$,
\begin{equation}
\label{(5.12)}
|\Omega(n)) = \sum_{\alpha_{i}=0}^{1} v(\underline{\alpha}) \tau_{1}^{\alpha_{1}} \cdots \tau_{n}^{\alpha_{n}}|\Omega_{A})
\end{equation}
with $v(\underline{\alpha}) \in {\cal H}_{S}$. Wherever $\alpha_{i}=0$, the corresponding spin is in direction $\vec{s}$,
for $\alpha_{i}=1$ we have $\vec{-s}$:
\begin{equation}
\label{(5.13.1)}
v(\underline{\alpha}) = v_{1}(\underline{\alpha}) \oplus v_{2}(\underline{\alpha})
\end{equation}
together with
\begin{equation}
\label{(5.13.2)}
P_{1}|n)_{1} = |s)_{1}(s|n)
\end{equation}
and
\begin{equation}
\label{(5.13.3)}
(1-P_{1})|n) = |-s)(-s|n)
\end{equation}
If we introduce $|(s|n)|^{2}=p_{1}$, thus $|(-s|n)|^{2} = 1-p_{1}$, and similarly  $|(s|m)|^{2}=p_{2}$, $|(-s|m)|^{2} = 1-p_{2}$,
and if $\alpha$ contains $l$ zeros and $(n-l)$ ones, we have
\begin{equation}
\label{(5.13.4)}
||v_{1}(\underline{\alpha})||^{2} = \mu p_{1}^{l}(1-p_{1})^{n-l} ; ||v_{2}(\underline{\alpha})||^{2} =(1-\mu) p_{2}^{l}(1-p_{2})^{n-l}
\end{equation}
together with
\begin{equation}
\label{(5.14)}
W(\underline{\alpha}) = ||v(\underline{\alpha})||^{2}
\end{equation}
Note that there is no collapse of the wave-function after each measurement - the evolution is unitary and only at the end 
the configuration of classical pointers is read. We shall refer to the above model as the \emph{Narnhofer-Thirring model}.

\subsection{Invariance of the mean entropy on the average and comparison with finite systems}

We are now ready to present our main results. Firstly, let us examine finite systems or, more precisely, systems with 
a finite number of degrees of freedom. We  consider, as in \cite{NarWre}, a finite composite system consisting 
of a finite system together with a measurement apparatus, for example a finite version of the Narnhofer-Thirring model, 
described by a collection of projectors
$\{P_{\alpha}\}$, $\alpha \in [1,n]$, such that
\begin{equation}
\label{(5.15)}
P_{\alpha}P_{\alpha^{'}} = P_{\alpha} \delta_{\alpha,\alpha^{'}}
\end{equation}
Inspired by von Neumann (Chap. V. of \cite{vN}, see also the discussion in Wightman (\cite{Wight1}, p. 431) concerning degenerate
spectra), it is natural to take the $\{P_{\alpha}\}$ ($\alpha=1, \cdots, n$) as ``macroscopic observables'', and define a density
matrix $\rho$ as \emph{decoherent} with respect to the set $\{P_{\alpha}\}$ if
\begin{equation}
\label{(5.16)}
P_{\alpha^{'}} \rho P_{\alpha} =0 \mbox{ if } \alpha \ne \alpha^{'}
\end{equation}
When a measurement is performed and the position of the pointer was observed, the probability of its position can only be evaluated
if we repeat the measurement. In the individual observation it has a definite value, and the measured system is in a pointer position:
reduction of the wave-function turns into the collapse of the wave-function. We have gained knowledge, not necessarily in the 
individual case, but on the average, and, indeed, by Lemma 3 of \cite{NarWre}, the average von Neumann entropy $S_{av}$, defined by
\begin{equation}
\label{(5.17)}
S_{av} = \sum_{\alpha=1}^{n} Tr(\rho P_{\alpha}) S(\rho_{\alpha})
\end{equation}
where
\begin{equation}
\label{(5.18)}
S(\rho) = -Tr (\rho \log(\rho))
\end{equation}
(with $k_{B}=1$) is the von Neumann entropy, is \emph{reduced}; above,
\begin{equation}
\label{(5.19)}
\rho_{\alpha} \equiv [Tr (\rho P_{\alpha})]^{-1} P_{\alpha} \rho P_{\alpha}
\end{equation}
is one of the collapsed states, obtained with probability $Tr(\rho P_{\alpha})$ through unitary evolution of an initial decoherent state,
i.e., which satisfies \eqref{(5.16)}. This follows from the strict concavity of the entropy of a finite system, i.e., 
\begin{equation}
\label{(5.20)}
S_{\Lambda}(\alpha \rho_{\Lambda}^{1}+(1-\alpha) \rho_{\Lambda}^{2}) > \alpha S_{\Lambda}(\rho_{\Lambda}^{1}) + (1-\alpha) S_{\Lambda}(\rho_{\Lambda}^{2})
\end{equation}
 The mean entropy is, however, \emph{affine}(\cite{LanRo}):
\begin{equation}
\label{(5.21)}
s(\alpha \omega_{1} + (1-\alpha) \omega_{2}) = \alpha s(\omega_{1}) + (1-\alpha) s(\omega_{2})
\end{equation}

For the statistical-thermodynamical significance of the property of affinity, see \cite{Sewell1}, p.60.

We now consider the Narnhofer-Thirring model. After $n$ (time) steps, we use the notation
\begin{equation}
\label{(5.22)}
\underline{\alpha}_{n} = (\alpha_{1}, \cdots, \alpha_{n})
\end{equation}

\begin{theorem}
\label{th:5.1}
Let $\omega$ be the state \eqref{(5.8)}. Then, in the Narnhofer-Thirring model with initial state $\omega$, the mean entropy
is conserved on the average by the collapse of $\omega$:
\begin{equation}
\label{(5.23)}
s_{av} = s(\omega)
\end{equation}
where
\begin{equation}
\label{(5.24)}
s_{av} \equiv \sum_{\underline{\alpha}_{n}} W_{\omega}(\underline{\alpha}_{n}) s(\omega_{\underline{\alpha}_{n}})
\end{equation}
and
\begin{equation}
\label{(5.25)}
\omega_{\underline{\alpha}_{n}} \equiv \omega(P_{\underline{\alpha}_{n}} .)[W_{\omega}(\underline{\alpha}_{n})]^{-1}
\end{equation}

\begin{proof}

Reading the pointer in a position $\underline{\alpha}_{n}$, i.e., after $n$ time steps, the initial state collapses to
\eqref{(5.25)}. The property of affinity \eqref{(5.21)} now implies that
\begin{eqnarray*}
s_{av} = \sum_{\underline{\alpha}_{n}} W_{\omega}(\underline{\alpha}_{n}) s(\omega_{\underline{\alpha}_{n}}) = \\
= s(\sum_{\underline{\alpha}_{n}} W_{\omega}(\underline{\alpha}_{n}) [W_{\omega}(\underline{\alpha}_{n})]^{-1} \omega(P_{\underline{\alpha_{n}}})\\
= s(\omega)
\end{eqnarray*}
by linearity of $\omega$ and the fact that $\sum_{\underline{\alpha}_{n}} P_{\underline{\alpha_{n}}} = 1$.

\end{proof}
\end{theorem}

As a consequence of Theorem ~\ref{th:5.1}, we have:

\begin{corollary}
\label{cor:5.1}
The statistical thermodynamical version of the second law of thermodynamics Theorem ~\ref{th:4.1} is \emph{stable} under non-automorphic
interactions with the environment, exemplified by the Narnhofer-Thirring model, in sharp contrast to the second law of thermodynamics
for the quantum Boltzmann entropy \cite{NarWre}.
\end{corollary}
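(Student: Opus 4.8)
The plan is to read this corollary as the assertion that the entropy-growth inequality \eqref{(4.1.2)} of Theorem~\ref{th:4.1} survives the appending of a non-automorphic interaction with the environment, and to derive it directly by juxtaposing Theorem~\ref{th:5.1} with Proposition~\ref{prop:5.1}. First I would recall that Theorem~\ref{th:4.1} establishes $s(\omega) \le s(\bar{\omega})$ for the automorphic (unitary) adiabatic step, so the statistical-thermodynamical second law holds as long as that step is the only dynamical ingredient. The content of the corollary is the question of whether coupling the system to a measuring device — the Narnhofer-Thirring collapse — can reverse this conclusion by \emph{reducing} the mean entropy, as happens paradoxically for systems with a finite number of degrees of freedom.

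Second, I would invoke Theorem~\ref{th:5.1}: in the Narnhofer-Thirring model the collapse of the initial state $\omega$ satisfies $s_{av} = s(\omega)$, i.e. the mean entropy is conserved on the average. Hence the non-automorphic step neither increases nor decreases the mean entropy on the average, and in particular cannot produce the strict decrease that would contradict \eqref{(4.1.2)}. Concatenating the two steps, the mean entropy remains monotone non-decreasing under the full process consisting of the automorphic transformation followed by the environmental collapse, which is precisely the stability asserted.

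Third, I would exhibit the sharp contrast claimed in the statement. For a finite number of degrees of freedom, Proposition~\ref{prop:5.1} gives the strict inequality $S_{av} < S(\rho)$: the same collapse strictly reduces the average von Neumann entropy \eqref{(5.18)}, and, as noted there (Lemma 3 of \cite{NarWre}), the identical reduction holds for the quantum Boltzmann entropy, so that the naive second law fails in both cases. The decisive difference lies entirely in the convexity structure of the entropy functional: the finite-volume entropy is \emph{strictly concave} (used in Proposition~\ref{prop:5.1}), whereas the mean entropy $s$ is \emph{affine}, \eqref{(3.6.13)} (used in Theorem~\ref{th:5.1}). Affinity replaces the strict Jensen inequality by an exact equality, converting the paradoxical reduction into conservation.

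The main obstacle here is conceptual rather than technical: one must recognise that ``stability'' means the compatibility of \eqref{(4.1.2)} with the conservation result \eqref{(5.24.1)}, and that the entire phenomenon hinges on affinity versus strict concavity. Once Theorem~\ref{th:4.1}, Theorem~\ref{th:5.1} and Proposition~\ref{prop:5.1} are in hand, no further computation is required; the corollary follows immediately by combining them, with the affine property of the mean entropy being the single ingredient that distinguishes the infinite-system behaviour from the finite one.
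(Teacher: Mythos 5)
Your proposal is correct and follows essentially the same route as the paper: the corollary is obtained by combining Theorem~\ref{th:5.1} (conservation of the mean entropy on the average under the collapse, resting on affinity \eqref{(3.6.13)}) with Proposition~\ref{prop:5.1} for the contrasting strict reduction in the finite case. Your added emphasis on affinity versus strict concavity as the decisive structural difference matches the discussion the author gives in the Conclusion.
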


The last sentence of Corollary ~\ref{cor:5.1} follows from lemma 3 of \cite{NarWre}.

\subsection{Physical interpretation}

In his paper ``Against measurement'', John Bell \cite{Bell} insisted on the necessity of physical precision regarding such words as
reversible, irreversible, information (whose information? information about what?).

In the adiabatic transformation there is a first step, a finite preparation time during which external forces act, at the end
of which the Hamiltonian associated to the initial equilibrium state is restored, and remains so ``forever'' during the
second step. The dynamics of preparation of the state is therefore *not* time-reversal invariant, leading to the \emph{time arrow}
mentioned in remark 1.1. 

Given a time arrow, the process $\omega_{1}(0) \to \omega_{2}(\infty)$ is reversible (irreversible) iff the inverse process 
$\omega_{2}(0) \to \omega_{1}(\infty)$ is possible (impossible). The first alternative takes place iff $s(\omega_{1}) = s(\omega_{2})$,
the second one iff $s(\omega_{1}) < s(\omega_{2})$. Infinite time $t=\infty$ means that $t$ is much larger than a quantity $t_{D}$,
the relaxation time, or decoherence time in measurement theory. Of course, irreversibility is incompatible with time-reversal invariance,
because the entropy or the mean entropy cannot both strictly increase and strictly decrease with time. This is the Schr\"{o}dinger paradox
\cite{Schr}, cited in Lebowitz's inspiring review of the issue of time-assymetry \cite{Leb}.

Entropy $S_{\Lambda} = |\Lambda|\log D -I_{\Lambda}$, with $I_{\Lambda}$ denoting the (quantum) information.
For quantum spin systems $0 \le S_{\Lambda}/|\Lambda| \le \log D$, and therefore $0 \le I_{\Lambda}/|\Lambda| \le \log D$. It attains
its maximum value for pure states, which are characterized by $S_{\Lambda} = 0$. Under ``collapse'', each collapsed state is pure
and therefore information is gained: this explains that the (Boltzmann and von Neumann) entropies are reduced, on the average,
violating the second law (on the average).
Equivalently, entropy is, in Boltzmann's sense, a measure of a macrostate's wealth of ``microstates'', and therefore grows by mixing, 
but it turns out that this growth is *not* extensive and disappears upon division by $|\Lambda|$ , i.e., taking the infinite volume limit 
(inequalities of Lanford and Robinson \cite{LanRo}), so that the affinity property \eqref{(5.21)} results and, with it, the stability of 
the second law in the form of Theorem 3.1 under interactions with the environment (Corollary ~\ref{cor:5.1}).

\section{Conclusion}

One central and dominating feature of the analysis over finite vs infinite dimensional spaces is that in the infinite dimensional 
case the solution may depend \emph{discontinuously} on the parameters of the problem. Indeed, infinite systems may exhibit 
\emph{singularities}, not present in finite macroscopic systems, well-known in the theory of \emph{phase transitions}: they
are parametrized by \emph{critical exponents}, which, moreover, display \emph{universal} properties, in excellent agreement with experiment!
Another example is furnished by the upper-semicontinuity of the mean entropy (versus continuity of the finite entropy).

In greater generality, the physical ``N large but finite'' differs qualitatively from ``N infinite'' because the latter exhibits 
\emph{universal} properties not found in finite systems. Two examples of these universal properties, crucial in our approach, are the 
upper semicontinuity and affinity of the mean entropy, whose finite-volume counterparts (continuity of $\frac{S_{\Lambda}}{|\Lambda|}$
and strict concavity of $\frac{S_{\Lambda}}{|\Lambda|}$) are not universal because, not being uniform in $|\Lambda|$, they depend on 
the volume $|\Lambda|$ of the system. The fact that (only) ``N infinite'' is in good agreement 
with experiment is explained by the fact that, with $N \approx 10^{24}$, macroscopic systems are extremely close to infinite systems 
(the success of the thermodynamic limit!)

It is well known that the thermodynamic entropy, which is relevant to the second law, does not, in general, coincide with the
(microscopic) Gibbs-von Neumann entropy, except for a limited class of states which includes the standard equilibrium states.
Our approach introduces two types of coarse-graining in the entropy; the first, in space, by building the mean entropy.
This enables consideration of states of infinite systems which may (and do sometimes) have nontrivial ergodic properties: starting from 
an unstable state at an initial time $t=0$, they may approach a different (``equilibrium'') state as $t \to \infty$ - i.e., there is
also a ``coarse graining'' in time. The important issue here is the fact that a sequence of pure states may tend to a mixed state,
a fact first observed by Hepp \cite{Hepp}, see also \cite{NTh1}, and valid only for states of infinite systems. In the examples given, 
the first step of the adiabatic transformation - which is the thermodynamically
nontrivial step - yields no change in the mean entropy (it remains equal to the initial value of zero), it is only the second step, which
is thermodynamically trivial - a relaxation - which yields a change in the specific entropy. This feature is not realistic and is due
to the extreme scarcity of models in which the time evolution of states is under control. Both theorems
3.1 and 4.1 are, however, of general validity, because of the universal properties mentioned above, and of the natural type of coarse-graining
associated to the use of the mean entropy (a density).

There is what seems to us to be a close analogy with the situation in classical dynamical systems as regards the intuitive reason
for this possible growth of the mean entropy with time. If one looks at the entropy $S(U_{t},V_{t})$ of a system with finite number
of degrees of freedom, $(U_{t},V_{t})$ are analogous to the individual ``trajectories'' of a classical map, which, for a (time-reversible)
K system, such as the baker map (\cite{LaMa}, Fig. 1.2.4 and section 1.3) are highly erratic and do not tend to any limit as $t \to \infty$.
On the other hand, in $s(u_{t},v_{t})$, where $s$ is the mean entropy and $u,v$ the energy and volume densities, the $u,v$ densities
may be expected to behave qualitatively as the density $\rho_{t}$, e.g., in the baker transformation (\cite{LaMa}, Ex. 4.1.3, p. 48),
\begin{equation}
\label{(6.1)}
\lim_{t \to \infty} \rho_{t} = \bar{\rho} \equiv 1
\end{equation}
\eqref{(6.1)} is analogous to (129), where, for $t=n=1,2, \cdots$, i.e., discrete time units,
\begin{equation}
\label{(6.2)}
\rho_{n} = P^{n} \rho_{0}
\end{equation}
and $P$ denotes the Ruelle-Perron-Frobenius operator \cite{LaMa} which maps densities to densities, and $\rho_{0}$ denotes an arbitrary
initial density
\begin{equation}
\label{(6.3)}
\rho_{0} \in L^{1}(X), \rho_{0} \ge 0, \int_{X} \rho_{0}(x) dx = 1
\end{equation}
with $X$ denoting the unit square in $\mathbf{R}^{2}$. By \eqref{(6.3)}, $\rho_{0}$ cannot be a delta function, i.e., individual trajectories
are excluded. $P$ is a Markov operator \cite{LaMa}, with a unique fixed point, which is the function
\begin{equation}
\label{(6.4)}
 \bar{\rho} = 1 \forall x \in X
\end{equation}
(analogous to the uniform ``microcanonical'' distribution on the square). The reduction of information which results from building a
density provides a ``restoring force'' towards equilibrium \eqref{(6.4)}, with maximal disorder or entropy. See also \cite{Wresz}.

Our main new result in this paper is that non-automorphic interactions with the environment, although known to produce 
on the average a strict reduction of the Boltzmann entropy of systems with finite number of degrees of freedom (lemma 3 of \cite{NarWre}), 
are proved to conserve the mean entropy on the average, as a consequence of the latter's property of affinity. As a consequence, we
do not need to assume that such interactions are rare on the thermodynamic time scale, in order to assure the validity of the second
law (see the remarks on the last paragraph of \cite{NarWre})and \emph{stability} with regard to interactions with the environment 
follows instead (Corollary ~\ref{cor:5.1}). We view this as an additional important indication of the naturalness of the present
approach, suggesting that it may turn out to be of general validity, including classical mechanics (see the introduction) and
relativistic quantum field theory, but serious open problems of technical nature remain in both cases.    

\emph{Acknowledgements}

I am greatly indebted to Bruno Nachtergaele, who provided me with two crucial elements of the proof of Theorem 4.1: his theorem (together
with Sims and Young) on local approximations to automorphisms of local observables, and Fannes' beautiful
inequality.

I also thank very heartily Heide Narnhofer for the very enjoyable and instructive collaboration \cite{NarWre}.

The present paper owes very much to several people: Elliott Lieb, for his encouraging remarks on the related \cite{Wresz}, Derek Robinson
for a correspondence in which he mentioned the property of affinity, Prof. D. Ruelle for his encouraging remarks in correspondence and for
asking the ``right questions'' in \cite{Ru}, Professor Oliver Penrose for correspondence on the Penrose-Gibbs theorem, and Geoffrey
Sewell for helping me to formulate the time-arrow problem in a most concise way.

I should also like to thank an anonimous referee for helpful remarks concerning a previous version of Definition 1.2.

\qquad
\textbf{Statement concerning data availability}
\textbf{The author confirms that all the data supporting the findings of this study are available within the article.}
\qquad

\end{document}